\newtheorem{cor}{Corollary}
\newtheorem{proposition}{Proposition}
\newtheorem{property}{Property}
\newtheorem{theorem}{Theorem}
\newcommand{\Rbl}{R_{\rm sum,bl}}	
\newcommand{\bp}{\noindent{\textbf{Proof.}}\ }
\newcommand{\ep}{\hfill $\Box$}
\newcommand{\cond}{\,\vert\,}
\newcommand{\indic}{\mathbf{1}}
\newfont{\bbb}{msbm10 scaled 500}
\newfont{\bb}{msbm10 scaled 1100}
\newcommand{\CC}{\mbox{\bb C}}
\newcommand{\RR}{\mbox{\bb R}}
\newcommand{\EE}{\mbox{\bb E}}
\newcommand{\PP}{\mbox{\bb P}}
\newcommand{\ev}{{\bf e}}
\newcommand{\hv}{{\bf h}}
\newcommand{\rv}{{\bf r}}
\newcommand{\wv}{{\bf w}}
\newcommand{\xv}{{\bf x}}
\newcommand{\yv}{{\bf y}}
\newcommand{\PtoK}{ \overset{\PP}{\underset{K \to \infty}{\to}} }
\newcommand{\astoK}{ \overset{a.s.}{\underset{K \to \infty}{\to}} }
\newcommand{\toK}{ \overset{}{\underset{K \to \infty}{\to}} }
\newcommand{\Id}{{\bf I}}
\newcommand{\Rm}{{\bf R}}
\newcommand{\Jc}{{\cal J}}
\newcommand{\Kc}{{\cal K}}
\newcommand{\Nc}{{\cal N}}
\newcommand{\betav}{\hbox{\boldmath$\beta$}}
\renewcommand{\arg}{{\hbox{arg}}}
\newcommand{\var}{{\hbox{var}}}
\DeclareFontFamily{U}{cmfi}{}
\DeclareFontShape{U}{cmfi}{m}{n}{ <-> cmfi10 }{}
\DeclareSymbolFont{CMFI}{U}{cmfi}{m}{n}
\def\argmax{\mathop{\rm argmax}}
\def\argmin{\mathop{\rm argmin}}
\newcommand{\vh}{\pmb{h}}
\newcommand{\vu}{\pmb{u}}
\renewcommand{\Rm}{\pmb{R}}
\renewcommand{\ev}{\pmb{e}}
\renewcommand{\hv}{\pmb{h}}
\renewcommand{\rv}{\pmb{r}}
\renewcommand{\wv}{\pmb{w}}
\renewcommand{\xv}{\pmb{x}}
\renewcommand{\yv}{\pmb{y}}
\begin{document}

\title{Utility Optimal Scheduling for Coded Caching in General Topologies}

\author{\IEEEauthorblockN{}
\IEEEauthorblockA{Richard Combes, Asma Ghorbel, Mari Kobayashi, and Sheng Yang \\
\thanks{This work was partly supported by Huawei Technologies France.}
LSS, CentraleSup\'elec \\
Gif-sur-Yvette, France\\
 {\tt \{firstname.lastname\}@centralesupelec.fr}
}
}

\maketitle
\begin{abstract}

We consider {\it coded caching} over the fading broadcast channel, where the users, equipped with a memory of finite size, experience
asymmetric fading statistics. It is known that a naive application of coded caching over the channel at hand performs poorly especially in the
regime of a large number of users due to a vanishing multicast rate. We overcome this detrimental effect by a careful design of opportunistic
scheduling policies such that some utility function of the long-term average rates should be maximized while balancing fairness among users.  
In particular, we propose a threshold-based scheduling that requires only statistical channel state information and one-bit feedback from each
user. More specifically, each user indicates via feedback whenever its SNR is above a threshold determined solely by the fading statistics and
the fairness requirement. Surprisingly, we prove that this simple scheme achieves the optimal utility in the regime of a large number of users.
Numerical examples show that our proposed scheme performs closely to the scheduling with full channel state information, but at a significantly reduced complexity.
\end{abstract}
\section{Introduction}\label{section:introduction}
Content delivery applications such as video streaming are envisioned to represent nearly 75\% of the mobile data traffic by 2020 \cite{index2015global}. 
The skewness of the video traffic together with the ever-growing cheap on-board storage memory suggests
that the quality of experience can be improved by caching popular content close to the end-users in wireless networks. 
Recent works have studied the gains provided by caching under various models and assumptions (see e.g. \cite{maddah2013fundamental,golrezaei2011femtocaching} and references therein). 
In this work, we consider content delivery using {\it coded caching} \cite{maddah2013fundamental} in a wireless network where a server is connected to $K$ users each equipped with a cache of finite memory. By a careful design of sub-packetization and cache placement, it is possible to create a multicast signal simultaneously useful for many users and thus decrease the delivery time. More specifically,  it has been proved that 
 the delivery time to satisfy $K$ distinct requests converges to a constant in the regime of a large number $K$ of users. In other words, the sum content delivery rate, defined as the total amount of requested bits divided by the delivery time, grows linearly with $K$. This striking result has motivated a number of follow-up works in order to study coded caching in more realistic scenarios (see e.g. \cite[Section VIII]{maddah2013fundamental}). 


Albeit conceptually and theoretically appealing, the promised gain of coded caching relies on some unrealistic assumptions~(see. e.g.
\cite{misconceptions}). In particular, \cite{submittedTWC2017, ghorbel2017opportunistic, SubmittedWiopt2016} have revealed that the
scalability of coded caching is very sensitive to the behavior of the multicast rate supported by the bottleneck link. It is worth
recalling that the multicast capacity of the fading broadcast channel is limited by the channel quality of the weak users, i.e., the
users whose channel gain is the smallest.   
Focusing on the case of the i.i.d. quasi-static Rayleigh fading channel, the works ~\cite{NgoAllerton2016,submittedTWC2017} further
showed that the long-term sum content delivery rate does not grow with the system dimension if coded caching is naively applied to this channel. In fact, the long-term average multicast rate of the i.i.d. Rayleigh fading channel vanishes, as it scales as ${\cal O}({1 \over K})$ as $K \to \infty$, \cite{jindal2006capacity}. 
When the users experience asymmetric fading statistics, the long-term average multicast rate is essentially limited by the users with
poor channel statistics. Therefore, the performance of coded caching may degrade even further, since nearly the whole resource is
wasted to enable the weak users to decode the common message. These observations have inspired a number of recent works to overcome these drawbacks  \cite{ghorbel2017opportunistic, SubmittedWiopt2016,   Shariatpanahi2016multiserver, NgoAllerton2016, zhang2017fundamental, submittedTWC2017, Shariatpanahi2016multiantenna, shariatpanahi2017physical}. 
The works \cite{Shariatpanahi2016multiserver,NgoAllerton2016,zhang2017fundamental,submittedTWC2017,Shariatpanahi2016multiantenna}
have considered the use of multiple antennas, while \cite{bidokhti2016noisy,zhang2016wireless} have proposed several interference
management techniques. Other recent works have studied opportunistic scheduling \cite{submittedTWC2017, ghorbel2017opportunistic,
SubmittedWiopt2016} in this context.  Finally, the interplay between the fairness and the gain of  coded caching has been studied in
a recent work \cite{SubmittedWiopt2016}. Although both the current work and  \cite{SubmittedWiopt2016} consider the same channel
model and address a similar question, they differ in their objectives and approaches. In \cite{SubmittedWiopt2016}, a new queueing
structure has been proposed to deal jointly with admission control, routing, as well as scheduling for a finite number of users. The
performance analysis built on the Lyapunov theory. The current work highlights the scheduling part and provides a rigorous analysis on the long-term average per-user rate in the regime of a large number of users. 

As a non-trivial extension of \cite{submittedTWC2017, ghorbel2017opportunistic}, we study opportunistic scheduling in order to achieve a scalable
sum content delivery while ensuring some fairness among users. To capture these two contrasted measures, we formulate our objective function by
an alpha-fairness family of concave utility functions~\cite{mowalrand}. Our main contributions of this work are three-fold: 
\begin{enumerate}
\item We propose a simple threshold-based scheduling policy and determine the threshold as a function of the fading statistics for each fairness parameter $\alpha$. 
Such threshold-based scheme exhibits two interesting features. On the one hand, the complexity is linear in $K$ and significantly reduced  with
respect to the original problem where the search is done over $K^2$ variables. On the other hand, a threshold-based policy does not require the
exact channel state information but only a one-bit feedback from each user. Namely, each user indicates whether its measured SNR is above the
threshold set before the communication. 
 A special case of the symmetric fading and the sum rate objective ($\alpha=0)$, our proposed scheme boils down to the scheme in \cite{submittedTWC2017, ghorbel2017opportunistic}. 
\item We prove that the proposed threshold-based scheduling policy is asymptotically optimal in Theorem 3. Namely, the utility achieved by our
  proposed policy converges to the optimal value as the number of users grows. The proof of Theorem 3 involves essentially three steps. First, we characterize the lower and upper bounds on the long-term average rate of each user. Second, we prove that the size of the selected user set grows unbounded as the number of users grows. Finally, we prove the convergence of the utility value. 
\item Our numerical experiments show that the proposed scheme indeed achieves a near-optimal performance. Namely, it converges to the
  selection scheme with full channel knowledge as the number of users and/or SNR increases. Such  scheme is therefore appropriate for
  a large number of users. In addition, the multicast rate is less sensitive to the user in the worst fading condition in the large
  SNR regime. Furthermore, the speed of convergence increases with the memory size and/or $\alpha$-fair parameter. In fact, Property \ref{pt2} in subsection \ref{subsection:Acc} justifies the impact of the memory. 
\end{enumerate}

The remainder of the paper is organized as follows. Section~\ref{section:model} provides the system model and Section~\ref{section:formulation}
formulates the fair scheduling problem as maximizing the $\alpha$-fair utility. In Section~\ref{sec:fair_scheduling}, we define the optimal
policy as well as a class of threshold-based policies with reduced complexity. In Section~\ref{sec:schduling_analysis}, we state and prove the main result,  that is, the threshold-based scheduling
policy achieves the optimal utility in the regime of a large number of users. We further characterize the threshold-based policy for different fairness criteria.  Section \ref{sec:numericalEx} provides numerical examples to validate our analysis in previous sections and compare the performance of the proposed threshold-based policy with other schemes.

Throughout the paper, we use $[k]$ to denote the set of integers $\{1, \dots, k\}$, and $f(x)\sim g(x)$ means that $\lim\limits_{x\to\infty}\frac{f(x)}{g(x)}=1$. We use the notation $\overset{\PP}{\underset{}{\to}}$ to denote convergence in probability and $\overset{a.s.}{\underset{}{\to}}$ to denote almost sure convergence.


\section{System Model}\label{section:model}
\begin{figure}[t]
	\centering
	\includegraphics[width=0.6\textwidth,clip=]{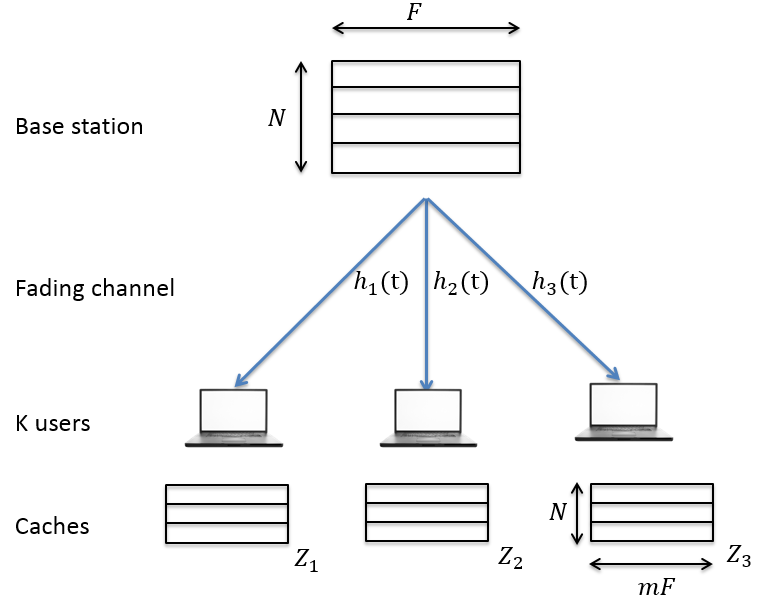}
	\vspace{-1em}
	\caption{System model with $K=3$.}
	\vspace{-1em}
	\label{fig:sysm}
\end{figure} 

We consider a content delivery system where a  server with $N$ files wishes to convey the requested files to $K$ users over a
wireless downlink channel. We assume that $N$ files are of equal size of $F$ bits and have equal popularity, while each user has a
cache of size $M F$ bits, where $M \ge 1$ denotes the cache size measured in files. We often use the normalized cache size denoted by
$m=M/N$. In this work, we focus on the regime of a large number of files, i.e., $N\geq K$, and assume that the requests from the users are all {\it distinct}. 
Further, each user can prefetch some content to fill their caches during off-peak hours, prior to the actual request.
We consider mainly the decentralized caching scheme of \cite{maddah2013decentralized}, where each user independently caches a subset of $mF$ bits of file $i$, chosen uniformly at random for $i=1,\dots, N$ under the memory constraint of $MF$ bits. By letting $W_{i|\Jc}$ denote the sub-file of $W_i$ stored exclusively in the cache memories of the user set $\Jc$, the cache memory $Z_k$ of user $k$ after decentralized caching is given by
\begin{align} \label{eq:Zk}
Z_k =\{ W_{i \cond \Jc}:  \;\; \forall \Jc \subseteq[K], \forall \Jc \ni k , \forall i \in [N] \}.
\end{align}
 Once the user requests are revealed, the server generates and sequentially conveys the codewords intended to each subset of users. 
Namely, assuming that user $k$ requests file $W_k$ for all $k$, the codeword intended to the subset $\Jc$ is given by
\begin{align}\label{eq:V}
V_{\Jc}=\oplus_{k\in \Jc}W_{k|\Jc\setminus\{k\}},
\end{align}
where $\oplus$ denotes the bit-wise XOR operation. The main idea here is to create a codeword useful to a subset of users by exploiting the receiver side information established during the placement phase. 
It has been shown for decentralized caching in \cite{maddah2013decentralized} that the delivery time, or the number of multicast transmissions, needed to satisfy $K$ {\it distinct} demands over the error-free shared link is 
\begin{align} \label{eq:defT}
T(m,K)= \left(1-m\right) 
\frac{1-(1-m)^K}{m}.
\end{align}

In order to ensure reliable delivery in a wireless channel, the codewords described in~\eqref{eq:V} should then be encoded with a proper
channel code in the physical layer. In this work, the physical layer is modeled as a single-antenna quasi-static fading Gaussian broadcast
channel. Specifically, we assume that the channel state remains constant during the transmission of any channel codeword, or, equivalently, any
physical layer frame. 
Let us focus on the transmission~$t$, where $t$ can be considered as the frame index. 
For a given channel codeword $\xv(t)\in\CC^n$, user $k$ receives
\begin{align}\label{eq:FadingBC}
\yv_k(t) = \sqrt{\tilde{h}_k}(t) \, \xv(t) + \wv_k(t),
\end{align}
where the input satisfies the power constraint $\|\xv(t)\|^2\leq n P$; $\{\tilde{h}_k(t)\}_k$ are the fading gains independently distributed
over users\footnote{Note that the phase of the channel coefficient is ignored since each receiver can rotate the signal to remove the phase. };
$\wv_k(t)\sim \Nc_{\CC} (0, \Id_n)$ is the additive white Gaussian noise assumed to be independent and identically distributed across time and
users. For simplicity, we define $h_k = {P} \tilde{h}_k$ assumed to be exponentially distributed with mean $\gamma_k$. We assume that each user
$k$ knows its channel realization $h_k$. In addition, we are particularly interested in the long-term behavior~(e.g., time span of hours or
days) of the system. To simplify such analysis, we further assume that the channel coefficient of each user changes to an independent
realization from codeword to codeword according to the same distribution, i.e., $h_k(t)$ is i.i.d.~over~$t$ for a given $k$.

It is well-known that the multicast capacity of the channel at hand, or the common message rate,  is given by 
\begin{align} \label{eq:multicast_rate}
R_{\rm mc} (\hv) = \log\left(1+  \min_{j\in [K]} h_j\right)
\end{align}
and is limited by the user in the worst channel condition. It has been proved in \cite{NgoAllerton2016} that such limitation is detrimental for a scalable content delivery network. 
To see this, let us first define the sum content delivery rate when coded caching is applied directly to the fading broadcast channel. 
In order to satisfy the distinct demands from $K$ users, or to \emph{complete} in total $K F$ demanded bits, 
we need to send $T(m,K) F$ bits over the wireless link. The corresponding transmission takes $\frac{T(m,K) F}{ R_{\rm mc}(\hv)} $
units of time. As a result, the sum content delivery rate of a naive application of coded caching 
for a given channel realization $\hv$ is given by 
\begin{align} \label{eq:Rmul}
 \frac{K}{T(m,K)} R_{\rm mc}(\hv)
\end{align}%
measured in [bits/second/Hz]. For convenience, we call such a naive application as the ``baseline''~(``bl'') scheme where the base station
serves all $K$ users with the multicast rate limited by the worst user as in \eqref{eq:multicast_rate}. The corresponding (long-term)~average sum content delivery rate is given by 
\begin{align} \label{eq:baseline_rate}
\Rbl (K)=\frac{K}{T(m,K)} \EE[R_{\rm mc}(\hv)].
\end{align}
To gain an insight into the harmful effect, let us consider the case of symmetric fading statistics ($\gamma_k=\gamma, \forall k$). The average multicast capacity $\EE[R_{\rm mc} (\hv)]$ vanishes as ${\cal O}(1/K)$ for $K \to \infty$ \cite{jindal2006capacity}, the average sum content delivery rate converges to a constant, yielding a non-scalable system. 
More precisely, we recall the following result. 
\begin{proposition} 
  \label{BaselineScheme}
The long-term average sum content delivery rate of baseline scheme is given by
\begin{align}
\Rbl (K) = \frac{K}{T(m,K)} e^{{K \over \gamma}} E_1\left({K \over \gamma}\right),
\end{align}
where we define the exponential integral function $E_1(x)=\int_{1}^{+\infty} {e^{-xt}  \over t} dt$.
As $K \to \infty$, we have 
\begin{align}
	\Rbl(K) \sim {\gamma m \over  1-m}. 
\end{align}%
\end{proposition}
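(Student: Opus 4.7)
The plan is to directly compute the expectation in \eqref{eq:baseline_rate} by exploiting the distribution of $\min_j h_j$, and then read off the large-$K$ asymptotics from known properties of the exponential integral $E_1$.

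First I would note that since $h_1,\dots,h_K$ are i.i.d. exponential with mean $\gamma$, the minimum $Y \defeq \min_{j\in[K]} h_j$ is itself exponential, with mean $\gamma/K$; equivalently the rate parameter is $\lambda = K/\gamma$. Then $\EE[R_{\rm mc}(\hv)] = \EE[\log(1+Y)]$. I would evaluate this by writing
\begin{equation*}
\EE[\log(1+Y)] = \lambda\int_0^{\infty} \log(1+y)\, e^{-\lambda y}\, dy,
\end{equation*}
substituting $u = 1+y$ to get $\lambda e^{\lambda}\int_1^{\infty} \log(u)\, e^{-\lambda u}\, du$, and then integrating by parts (differentiate $\log u$, integrate $e^{-\lambda u}$). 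The boundary term vanishes and the remaining integral collapses to $\frac{1}{\lambda}\int_1^{\infty} \frac{e^{-\lambda u}}{u}\, du = \frac{1}{\lambda} E_1(\lambda)$. Hence
\begin{equation*}
\EE[R_{\rm mc}(\hv)] = e^{\lambda} E_1(\lambda) = e^{K/\gamma} E_1(K/\gamma),
\end{equation*}
which, combined with \eqref{eq:baseline_rate}, gives the first claim.

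For the asymptotic, I would handle the two factors in $\Rbl(K)=\frac{K}{T(m,K)} e^{K/\gamma}E_1(K/\gamma)$ separately. From \eqref{eq:defT}, $(1-m)^K\to 0$ as $K\to\infty$, so $T(m,K)\to \frac{1-m}{m}$ and thus $\frac{K}{T(m,K)} \sim \frac{K m}{1-m}$. On the other hand, the classical asymptotic expansion of the exponential integral gives $E_1(x) = \frac{e^{-x}}{x}\bigl(1+o(1)\bigr)$ as $x\to\infty$, so that $e^{K/\gamma}E_1(K/\gamma)\sim \frac{\gamma}{K}$. Multiplying these two equivalents yields $\Rbl(K)\sim \frac{\gamma m}{1-m}$, as claimed.

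There is no real obstacle here: the only nontrivial step is recognizing the $E_1$ identity after integration by parts, and the asymptotic analysis is a standard application of the Poincar\'e expansion of $E_1$. The only care needed is to justify that the $(1-m)^K$ term in $T(m,K)$ can be discarded and that the $o(1)$ factor in the $E_1$ expansion passes through the product, both of which are immediate.
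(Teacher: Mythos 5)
Your proof is correct and follows essentially the same route as the paper: both use that the minimum of i.i.d.\ exponentials is exponential with mean $\gamma/K$, reduce $\EE[R_{\rm mc}]$ to $e^{K/\gamma}E_1(K/\gamma)$, and combine $K/T(m,K)\sim Km/(1-m)$ with $e^{K/\gamma}E_1(K/\gamma)\sim \gamma/K$. The only cosmetic difference is the last asymptotic: the paper linearizes $\log\bigl(1+(\gamma/K)x\bigr)$ under the integral, while you invoke the Poincar\'e expansion of $E_1$ directly — equivalent and equally standard.
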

\begin{proof}
  Refer to appendix \ref{appendix:prop1}.
\end{proof}
This negative result motivates us to study some opportunistic scheduling strategy which benefits both from the coded caching gain and the
diversity of the underlying wireless channel, while ensuring certain fairness among users.  


\section{Problem Formulation}\label{section:formulation}
In this section, we first review the fading Gaussian broadcast channel where the transmitter wishes to convey $2^K-1$ mutually independent
messages, each intended to a subset of users. We recall the capacity region achieved by superposition encoding and provide the optimal power
allocation. This serves as the ultimate upper bound on the fair scheduling problem. Then, we formulate our objective function by an alpha-fair
family of concave utility functions. 

\subsection{Capacity region of the fading Gaussian broadcast channel}
In this subsection, we review Theorems 1 and 2 of \cite{SubmittedWiopt2016,ghorbel2017opportunistic} which serve as the 
upper bound of more practical scheduling policies considered shortly.  
It readily follows that the channel in \eqref{eq:FadingBC} for a given channel realization $\hv$ is a degraded Gaussian broadcast channel. Without loss of generality, let us assume $h_1\geq \dots \geq h_K$. 
 Let us consider that the transmitter wishes to convey 
$2^K-1$ mutually independent messages, denoted by $\{M_{\Jc}\}$, where $M_{\Jc}$ denotes the message intended to the users in subset
$\Jc\subseteq [K]$. Each user $k$ must decode all messages $\{M_{\Jc}\}$ for $\Jc\ni k$. By letting $R_{\Jc}$ denote the multicast rate of the
message $M_{\Jc}$, we say that the rate-tuple $\Rm\in \RR_+^{2^K-1}$ is achievable if there exists some encoding and decoding functions such
that decoding error probability can be arbitrarily small with large codeword length~$n$. 
The capacity region is defined as the set of all achievable rate-tuples and is given by the following theorem. 
\begin{theorem}
  \label{theorem:Region} 
The capacity region $\Gamma(\hv)$ of a $K$-user degraded Gaussian broadcast channel with fading gains $h_1 \geq \dots \geq h_K$ and $2^K-1$ independent messages $\{M_{\Jc}\}$ is given by 
\begin{align}\label{eq:9}
\sum_{\Kc: k\in \Kc \subseteq [k]} R_{\Kc} & \leq \log\frac{1+ h_k \sum_{j=1}^{k} \beta_j }{1+ h_k\sum_{j=1}^{k-1} \beta_j }, \quad k=2,
\dots, K,
\end{align}
for non-negative variables $\{\beta_k\}$ such that $\sum_{k=1}^K \beta_k \leq 1$. 
\end{theorem}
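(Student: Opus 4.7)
The plan is to establish \eqref{eq:9} as the capacity region of the degraded Gaussian broadcast channel by pairing a standard achievability/converse argument for the $K$-user degraded BC with a message-aggregation step: the $2^K-1$ sub-messages are grouped into $K$ virtual layers indexed by the \emph{weakest} recipient in each subset, i.e., layer $k$ collects all $M_{\Kc}$ with $\max \Kc = k$, equivalently $k \in \Kc \subseteq [k]$.

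\textbf{Achievability.} I would use $K$-layer Gaussian superposition coding, $\xv = \sum_{k=1}^{K} \xv_k$, with independent $\xv_k \sim \Nc_{\CC}(\mathbf{0}, \beta_k \Id_n)$ and $\sum_k \beta_k \leq 1$. In layer $k$ I multiplex the messages $\{M_{\Kc} : k \in \Kc \subseteq [k]\}$ at aggregate rate $S_k \defeq \sum_{\Kc: k \in \Kc \subseteq [k]} R_{\Kc}$. Because the channel is degraded, user $k$ performs successive interference cancellation of layers $K, K-1, \ldots, k$, treating layers $1, \ldots, k-1$ (which are intended for strictly stronger users) as Gaussian noise. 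The bottleneck rate for layer $\ell$ is the one achieved at user $\ell$, giving exactly the right-hand side of \eqref{eq:9}. Any recipient $j \in \Kc$ with $\max \Kc = k$ satisfies $h_j \geq h_k$, so the same decoding procedure succeeds at $j$; hence every $M_{\Kc}$ is correctly delivered to every user in $\Kc$.

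\textbf{Converse.} For any achievable tuple $(R_{\Kc})$, define $\tilde{M}_k$ as the concatenation of all messages in layer $k$, with aggregate rate $S_k$. Because $h_1 \geq \cdots \geq h_K$ and the channel is stochastically degraded, user $k$ can decode anything that a weaker user (index $>k$) must decode; in particular, it can recover the aggregates $\tilde{M}_k, \tilde{M}_{k+1}, \ldots, \tilde{M}_K$. Thus $(S_1, \ldots, S_K)$ is achievable on the standard $K$-user Gaussian degraded BC with $K$ nested private messages, whose capacity region is the Bergmans region \eqref{eq:9}, obtained by iterated application of the entropy power inequality together with Fano's inequality. Since the constraints are written directly in terms of $S_k$, unpacking $S_k$ into the sum $\sum_{\Kc: k\in\Kc\subseteq[k]} R_{\Kc}$ produces the claimed outer bound.

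\textbf{Main obstacle.} The delicate step is the reduction in the converse: one must verify that Fano's inequality applied separately to each $M_{\Kc}$ (with $k \in \Kc$) implies a joint Fano-type bound on the aggregate $\tilde{M}_k$ at user $k$, using independence of the sub-messages and a union bound on decoding error, and that no constraint is lost by collapsing $2^K-1$ rates into $K$ aggregates. Once this identification is made, what remains is the classical Gaussian degraded BC converse; the EPI yields the single-letter bound and no new idea is needed.
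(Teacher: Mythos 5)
Your proposal is correct and takes essentially the same approach as the paper: both you and the paper partition the $2^K-1$ messages into $K$ aggregates indexed by the weakest recipient of each subset (i.e.\ $k\in\Kc\subseteq[k]$), prove achievability by $K$-level superposition coding with successive cancellation from the bottom layer, and prove the converse by reducing to the private-message capacity region of the degraded Gaussian BC. The paper is terser in the converse (it simply observes that user $k$ must decode mega-message $k$, which places the aggregate rate tuple inside the known private-message region) and cites El Gamal and Kim rather than re-deriving the Bergmans region via the entropy power inequality, but the idea and structure are identical.
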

\begin{proof}
The proof is quite straightforward and is based on rate-splitting and the private-message region of degraded broadcast channel. For
completeness, see details in Appendix \ref{subsection:th1}. 
\end{proof}
In order to characterize the boundary of the capacity region $\Gamma(\hv)$, we consider the weighted sum rate maximization given as
\begin{align}\label{eq:generalWSR}
\max_{\rv \in \Gamma(\hv)}\sum_{\Jc: \Jc \subseteq [K]} \theta_{\Jc} r_{\Jc}.
\end{align}
By exploiting a simple property of the capacity region, the problem at hand can be cast into a simpler problem as summarized below. 

\begin{theorem}
  \label{theorem:WSR}
The weighted sum rate maximization with $2^K-1$ variables in \eqref{eq:generalWSR} reduces to a simpler problem with $K$ variables, given by 
\begin{align}\label{eq:powerallocation}
f(\betav) = \sum_{k=1}^K \tilde{\theta}_k \log\frac{1+h_k \sum_{j=1}^{k} \beta_j }{1+ h_k\sum_{j=1}^{k-1} \beta_j }, 
\end{align}
where $\tilde{\theta}_k$ denotes the largest weight for user $k$ 
\begin{align}
\tilde{\theta}_k=\max_{\Kc: k\in \Kc \subseteq [k]}\theta_{\Kc}.
\end{align}%
\end{theorem}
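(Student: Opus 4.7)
The plan is to exploit the block-separable structure of the constraints in Theorem~\ref{theorem:Region}. The key observation is that each nonempty subset $\Kc \subseteq [K]$ appears in exactly one of the $K$ constraints, namely the one indexed by $k = \max(\Kc)$: the condition $k \in \Kc \subseteq [k]$ is equivalent to $\max(\Kc) = k$. Consequently the $2^K-1$ rate variables $\{R_{\Kc}\}$ split into $K$ disjoint groups, group $k$ being $\{R_{\Kc} : \max(\Kc) = k\}$, with each group subject to a single upper bound
\[
C_k(\betav) \;\defeq\; \log\frac{1 + h_k \sum_{j=1}^{k}\beta_j}{1 + h_k \sum_{j=1}^{k-1}\beta_j}.
\]

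First, I would rewrite the weighted sum rate as
\[
\sum_{\Jc \subseteq [K]} \theta_{\Jc}\, r_{\Jc} \;=\; \sum_{k=1}^{K} \Biggl(\sum_{\Kc: \max(\Kc)=k} \theta_{\Kc}\, r_{\Kc}\Biggr),
\]
and, for a fixed feasible $\betav$, treat the maximization over rates as $K$ decoupled inner linear programs. Each inner problem maximizes a non-negative linear combination $\sum_{\Kc: \max(\Kc)=k} \theta_{\Kc} r_{\Kc}$ over the simplex $\{r_{\Kc} \ge 0, \sum r_{\Kc} \le C_k(\betav)\}$. The solution is standard: concentrate the entire budget $C_k(\betav)$ on the index $\Kc^\star(k) \in \argmax_{\Kc: \max(\Kc)=k} \theta_{\Kc}$, yielding the optimal value $\tilde\theta_k\, C_k(\betav)$ with $\tilde\theta_k = \max_{\Kc: k\in\Kc\subseteq[k]}\theta_{\Kc}$.

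Substituting this back collapses the joint problem to
\[
\max_{\betav \ge 0,\ \sum_k \beta_k \le 1} \ \sum_{k=1}^{K} \tilde\theta_k \log\frac{1+h_k \sum_{j=1}^{k}\beta_j}{1+h_k \sum_{j=1}^{k-1}\beta_j},
\]
which is exactly \eqref{eq:powerallocation}. Finally, I would note that the reduction works in both directions: any $\betav$ together with the rate allocation $r_{\Kc^\star(k)} = C_k(\betav)$ and $r_{\Kc}=0$ otherwise is feasible for \eqref{eq:9}, so no value is lost.

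The only subtle point—and the one to verify carefully—is the partition claim: that the sets $\{\Kc : \max(\Kc)=k\}$ indexed by $k \in [K]$ are disjoint and cover every nonempty $\Kc \subseteq [K]$, so that neither the objective nor the constraint system double-counts any rate. Once this bookkeeping is in place, the rest is a trivial linear program on each block. No technical obstacle beyond this is expected.
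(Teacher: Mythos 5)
Your proof is correct and takes essentially the same approach as the paper's: both decompose the $2^K-1$ rate variables into $K$ disjoint blocks according to $k=\max(\Kc)$ (equivalently, $k\in\Kc\subseteq[k]$), observe that each block is constrained to a simplex with total budget $C_k(\betav)$, and note that maximizing a nonnegative linear form over that simplex puts all the mass on the subset with the largest weight $\tilde\theta_k$. You spell out the partition claim and the inner linear program a bit more explicitly than the paper, but the underlying idea and structure are identical.
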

\begin{proof}
  Refer to Appendix \ref{subsection:th2}.
\end{proof}

\subsection{Application to coded caching }\label{subsection:Acc}
By performing coded caching to the user subset $\Jc$, the total number of bits to be multicast to satisfy $|\Jc|$ distinct demands is equal to $T(m,|\Jc|)F$. 
By letting $R_{\Jc}$ denote the multicast rate of the codewords intended to user subset $\Jc$, the per-user rate after applying coded caching to subset $\Jc$ is given by $\frac{1}{T(m,|\Jc|)}R_{\Jc}$ for any user in $\Jc$. By simultaneously applying coded caching over different subset of users, the per-user data rate is given by
  \begin{align}\label{eq:rel}
U_k&=\sum_{\Jc:k\in\Jc\subseteq[K]}\frac{1}{T(m,|\Jc|)}R_{\Jc}.
\end{align}
Using \eqref{eq:rel}, the weighted sum of the individual user rates $\sum_{k=1}^{K}\tau_k U_k$, for any $\{\tau_k\}_{1\leq k\leq K}$, can be rewritten as: 
\begin{align}
\sum_{k=1}^{K}\tau_k U_k&=\sum_{k=1}^{K}\tau_k\sum_{\Jc:k\in\Jc\subseteq[K]}\frac{1}{T(m,|\Jc|)}R_{\Jc}\\
&=\sum_{\Jc:\Jc\subseteq[K]}\sum_{k:k\in\Jc}\frac{\tau_k}{T(m,|\Jc|)}R_{\Jc}\\
&=\sum_{\Jc:\Jc\subseteq[K]}\theta_{\Jc}R_{\Jc},
\end{align}
where $\theta_{\Jc}=\frac{\sum_{k:k\in\Jc}\tau_k}{T(m,|\Jc|)}$. Hence the problem can be reduced to that of the previous subsection. Throughout the paper we use three facts concerning the mapping $T$ which are stated below.

\begin{property}\label{pt2}
$T(m,k)$ converges to $T(m,\infty)\triangleq\frac{1-m}{m}$ when $k\rightarrow \infty$. The larger $m$ is, the faster it converges.
\end{property}
\begin{property}\label{pt1}
$T(m,k)$ is an increasing function of $k$ and so $T(m,1)\leq T(m,k)\leq T(m,\infty)$. 
\end{property}
\begin{property}\label{pt3}
$\frac{k}{T(m,k)}$ is an increasing function of $k$. 
\end{property}

\subsection{Objectives}

 Since implementing superposition encoding is complex, we now restrain ourselves to practical schemes which, for each channel realization $\hv$, select a group of users $\Jc \subset \{1,...,K\}$ to perform the delivery scheme of \cite{maddah2013decentralized} to $\Jc$ at rate $\log(1+\min_{j \in \Jc} h_j)$. Consider $\pi$ a scheduling policy, which is a mapping from $(\mathbb{R}^{+})^K$ to the set of subsets of $\{1,...,K\}$. For a channel
 realization $\hv = (h_1,\dots,h_K)$, the policy $\pi$ chooses a group of users $\Jc^\pi(\hv) \subset \{1,\dots,K\}$ for transmission, where
 the transmission strategy is the one described in previous sections. We denote by $\Pi$ the set of admissible policies. Given the policy $\pi$
 and channel realization $\hv$, user $i$ is served at the rate given by 
 \begin{align}\label{eq:90}
	{\indic\{i \in \Jc^\pi(\hv) \} \over T(m,|\Jc^\pi(\hv)|)} \log\left(1 + \min_{j \in \Jc^\pi(\hv)} h_j\right),
\end{align}
so that the rate depends on both the size of the selected group $|\Jc^\pi(\hv)|$ and the minimal channel gain $\min_{j \in \Jc^\pi(\hv)}
h_j$ among the chosen users. It is noted that for a fixed value of $\min_{j \in \Jc^\pi(\hv)} h_j$, the rate \eqref{eq:90} is a decreasing function of
the group size $|\Jc^\pi(\hv)|$ due to Property \ref{pt1}, while for a fixed group size, the rate \eqref{eq:90} is an increasing function of $\min_{j \in \Jc^\pi(\hv)} h_j$. 

Under policy $\pi$, the long-term average rate of user $i$ is the expectation of the instantaneous data rate over the channel realizations $\vh$:
$$
	U_i^\pi = \EE\left( {\indic\{i \in \Jc^\pi(\hv) \} \over T(m,|\Jc^\pi(\hv)|)} \log(1 + \min_{j \in \Jc^\pi(\hv)} h_j)\right).
$$ 
We are interested in utility-optimal scheduling, where the goal is to maximize some utility function of the long-term rates. We restrict
our attention to $\alpha$-fair allocations \cite{mowalrand}, namely,
\begin{align}\label{eq:utilitymax}
		\pi^\star \in \underset{\pi \in \Pi}{\argmax} \left \{{1 \over K} \sum_{i=1}^K g_{\alpha}( U_i^\pi)\right\} 
\end{align}
with 
$$	
	g_\alpha(x) = \begin{cases} 	
{x^{1-\alpha} - 1 \over 1 - \alpha}, &\text{ if } \alpha \ne 1, \\
\log(x), &\text{ if } \alpha = 1.
												\end{cases}
$$
It is noted that $\alpha \mapsto g_\alpha(x)$ is continuous for any fixed $x$ since $\lim_{\alpha \to 1} {x^{1-\alpha} - 1 \over \alpha - 1} = \log(x)$. It is also noted that $\alpha = 0$ corresponds to the sum rate maximization $g_\alpha(x) = x-1$, $\alpha = 1$ corresponds to proportional fairness $g_{\alpha}(x) = \log(x)$, and $\alpha \to +\infty$ corresponds to max-min fairness.

\section{Fair scheduling}\label{sec:fair_scheduling}
In this section we study scheduling algorithms for our setting, where, for each channel realization, a group of users are selected for transmission, with the goal of maximizing some utility function of the long term user rates.

\subsection{Optimal policy}

The optimal policy $\pi^\star$ depends only on the channel gain statistics, $\gamma_1,...,\gamma_K$, however it is usually impractical to compute it due to the difficulty to maximize over $\pi \in \Pi$. A practical approach is to use an iterative scheme. Assume that time is slotted, where $\hv(t) = (h_1(t),...,h_K(t))$ is the vector of channel gains at time $t$. Consider the iterative algorithm which at time slot $t$ selects the group:
\begin{equation}\label{eq:scheduling_gradient}
	\Jc(\vh(t),t) \in \argmax_{\Jc\subseteq[K]} \left\{ {1 \over T(m,|\Jc|)} \log(1 + \min_{j \in \Jc} h_j(t)) \sum_{i=1}^K {\indic\{i \in \Jc \} \over u_i(t)^{\alpha}} \right\},
\end{equation}
where $\vu(t) = (u_1(t),...,u_K(t))$ is the vector of empirical data rates up to time $t$, and obeys the recursive equation:
\begin{align*}
	 u_i(t+1) = {1 \over t+1}\left[ t u_i(t) +  {\indic\{i \in \Jc(\vh(t),t) \} \over T(m,|\Jc(\vh(t),t))|)} \log(1 + \min_{j \in \Jc(\vh,t)} h_j(t))\right].
\end{align*}
\begin{proposition}\label{prop:gradient}
	Under the above scheme $\vu(t)$ converges almost surely to a utility optimal allocation:
	$$
		{1 \over K} \sum_{i=1}^K g_\alpha(u_i(t)) \to_{t \to \infty}^{a.s.}  \underset{\pi \in \Pi}{\max} \left\{ {1 \over K}
                \sum_{i=1}^K g_\alpha(U_i^\pi) \right\}. 
	$$
\end{proposition}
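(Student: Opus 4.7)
The plan is to recognize the scheme (\ref{eq:scheduling_gradient}) as a classical gradient-based opportunistic scheduling algorithm, and to invoke a standard stochastic approximation argument combined with a concave Lyapunov analysis. The overall strategy has four ingredients: characterising the feasible long-term rate region, identifying the scheduling rule as a maximiser of a gradient inner product, tracking the iterates by an ODE, and exhibiting a Lyapunov function for that ODE.

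First I would introduce the set $\mathcal{R}\subset\mathbb{R}_+^K$ of long-term rate vectors achievable by policies in $\Pi$ (closed under time-sharing, hence convex and, being bounded by $\mathbb{E}[\log(1+\max_k h_k)]$ componentwise, compact). The utility problem (\ref{eq:utilitymax}) becomes the maximisation of the strictly concave function $G(\vu)={1\over K}\sum_{i=1}^K g_\alpha(u_i)$ over $\mathcal{R}$, with a unique maximiser $\vu^\star$. A key observation is that for any channel realisation $\vh$ and any $\vu$ with strictly positive components, the group $\mathcal{J}(\vh,\vu)$ selected in (\ref{eq:scheduling_gradient}) is exactly a maximiser of $\langle\nabla G(\vu),\rv(\vh,\mathcal{J})\rangle$ over the instantaneous rate vectors $\rv(\vh,\mathcal{J})$ obtained by choosing a subset $\mathcal{J}$, since $\partial G/\partial u_i=K^{-1}u_i^{-\alpha}$ and the contribution of user $i$ is $\indic\{i\in\mathcal{J}\}\log(1+\min_{j\in\mathcal{J}}h_j)/T(m,|\mathcal{J}|)$.

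Next, I would rewrite the recursion for $\vu(t)$ in Robbins--Monro form,
\begin{equation*}
\vu(t+1)=\vu(t)+\tfrac{1}{t+1}\bigl(\rv(\vh(t),\mathcal{J}(\vh(t),t))-\vu(t)\bigr),
\end{equation*}
with vanishing step size $1/(t+1)$, bounded noise (since instantaneous rates are bounded under an $\mathbb{E}[\log(1+h)]<\infty$ argument via truncation), and Markov noise reduced to i.i.d. through $\vh(t)$. Standard stochastic approximation results (Kushner--Yin or Borkar) then imply that, almost surely, $\vu(t)$ tracks the mean-field ODE
\begin{equation*}
\dot{\vu}=\phi(\vu)-\vu,\qquad \phi(\vu)\defeq\mathbb{E}_{\vh}\bigl[\rv(\vh,\mathcal{J}^\star(\vh,\vu))\bigr],
\end{equation*}
where $\mathcal{J}^\star(\vh,\vu)$ maximises $\langle\nabla G(\vu),\rv(\vh,\cdot)\rangle$. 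By construction, $\phi(\vu)\in\mathcal{R}$ and $\langle\nabla G(\vu),\phi(\vu)\rangle=\max_{\rv\in\mathcal{R}}\langle\nabla G(\vu),\rv\rangle$.

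To analyse the ODE, I would use $G$ itself as Lyapunov function. Concavity of $G$ together with the maximiser property of $\phi$ gives
\begin{equation*}
\frac{d}{dt}G(\vu)=\langle\nabla G(\vu),\phi(\vu)-\vu\rangle\ge\langle\nabla G(\vu),\vu^\star-\vu\rangle\ge G(\vu^\star)-G(\vu)\ge 0,
\end{equation*}
so $G(\vu(t))$ is non-decreasing along trajectories and strictly increasing unless $\vu=\vu^\star$. LaSalle's invariance principle yields convergence $\vu(t)\to\vu^\star$ for the ODE, which, combined with the tracking result, produces the claimed almost sure statement. The main obstacle is the boundary of the positive orthant: for $\alpha>0$, $\nabla G$ blows up as any $u_i\to 0$, so one must prove that iterates stay in a compact subset of the open orthant. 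I would handle this by noting that the very divergence of $u_i^{-\alpha}$ forces the scheduler to include any starved user whenever feasible, which, together with the assumption of non-degenerate fading (so every user can be chosen with positive probability), gives a uniform lower bound $u_i(t)\ge\epsilon>0$ eventually almost surely, making the ODE analysis and the stochastic approximation theorem both applicable.
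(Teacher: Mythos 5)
Your proposal is correct in substance, but it is not really a \emph{different} route: what you have written out is, in outline, the proof of the cited result. The paper's own ``proof'' of Proposition~\ref{prop:gradient} is a one-line citation of Stolyar's theorem on gradient scheduling, and Stolyar's argument proceeds precisely as you sketch: identify the feasible long-term rate region $\mathcal{R}$ as convex and compact, observe that the slot-by-slot group choice in~\eqref{eq:scheduling_gradient} greedily maximises the inner product of the instantaneous service-rate vector with $\nabla G(\vu(t))$, write the recursion in Robbins--Monro form with step size $1/(t+1)$, and show via a Lyapunov/fluid-limit argument that the empirical average tracks an ODE whose trajectories increase $G$ and converge to the maximiser on $\mathcal{R}$. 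So you have reconstructed the content that the authors delegate to the reference, which is useful pedagogically but mathematically the same approach.

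Two points in your sketch are more delicate than you make them appear, and they are exactly where a self-contained proof would have to invest the work the citation absorbs. First, the mean-field map $\phi(\vu)=\mathbb{E}_{\vh}[\rv(\vh,\mathcal{J}^\star(\vh,\vu))]$ is defined through an argmax and is in general only upper semicontinuous, not Lipschitz, so the classical Kushner--Yin/Borkar ODE theorems for smooth drift fields do not apply directly; one must either pass to a differential-inclusion formulation or exploit the structure of the problem (as Stolyar does) to get the tracking result. Second, for $\alpha\ge 1$ the utility and its gradient blow up on the boundary of the orthant, and your argument that starvation is self-correcting (a starved user has an enormous weight $u_i^{-\alpha}$ and so is always scheduled) is the right intuition but is stated heuristically; turning it into a uniform almost-sure lower bound $u_i(t)\ge\epsilon$ for all large $t$ requires a drift argument of its own. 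Neither point invalidates your plan, but a reader would not accept ``standard results imply'' at those two places without the reference the paper supplies.
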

\bp The above scheme is an example of a general class of schemes called \emph{gradient scheduling} schemes. The above result follows from a straightforward application of the results of~\cite{stolyar} which proves the asymptotic optimality of gradient scheduling schemes. \ep

Therefore, utility-optimal scheduling can be achieved simply by applying the above scheme during a large number of time slots. By corollary, we deduce an alternative characterization of the optimal policy which is essential to prove our main result.
\begin{cor}\label{cor:gradient}
	The following scheme yields a utility optimal scheduling:
	$$
		\Jc^\star(\vh) \in \argmax_{\Jc} \left\{ {1 \over T(m,|\Jc|)} \log(1 + \min_{j \in \Jc} h_j) \sum_{i=1}^K {\indic\{i \in \Jc \} \over (U_i^{\pi^\star})^{\alpha}} \right\}. 
	$$
\end{cor}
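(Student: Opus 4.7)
The plan is to derive Corollary \ref{cor:gradient} as the stationary counterpart of the iterative rule in Proposition \ref{prop:gradient}. First, I would invoke Proposition \ref{prop:gradient} to conclude that the empirical rate vector $\vu(t)$ converges almost surely to a utility-optimal rate vector $(U_1^{\pi^\star},\dots,U_K^{\pi^\star})$; for $\alpha>0$ this vector is unique because $g_\alpha$ is strictly concave and the achievable rate region is convex, so $u_i(t)\to U_i^{\pi^\star}$ coordinatewise, and the case $\alpha=0$ can be handled by a standard accumulation-point argument on the compact rate region.

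Next, I would pass to the limit inside the selection rule \eqref{eq:scheduling_gradient}. Since the fading distribution is atomless, the set of channel realizations $\vh$ for which two distinct subsets tie in the $\argmax$ at the limiting weight vector $((U_i^{\pi^\star})^{-\alpha})_i$ has Lebesgue measure zero; on the complement the $\argmax$ is continuous in the weights, so for $t$ large enough the gradient scheme almost surely selects exactly the subset $\Jc^\star(\vh)$ prescribed in Corollary \ref{cor:gradient}. Combining this with the strong law of large numbers applied to the recursion defining $\vu(t)$ then yields
\begin{equation*}
U_i^{\pi^\star} = \EE\left[\frac{\indic\{i \in \Jc^\star(\vh)\}}{T(m,|\Jc^\star(\vh)|)} \log\left(1 + \min_{j \in \Jc^\star(\vh)} h_j\right)\right],
\end{equation*}
that is, the stationary policy $\Jc^\star$ reproduces the long-term rates of the gradient scheme, and therefore attains the maximum utility in \eqref{eq:utilitymax}.

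The main obstacle is making the interchange of $t\to\infty$ and the $\argmax$ operator rigorous: this requires the almost sure continuity of the selection rule at the limiting weight vector together with a dominated convergence step to carry the limit through the expectation that defines $U_i^{\pi^\star}$. Ties in the $\argmax$ occur on a null set and can be broken by any measurable selector without affecting the expected rates, so the statement of the corollary as an $\in\argmax$ inclusion is well-posed.
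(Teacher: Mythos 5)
Your proposal follows essentially the same route as the paper's (one-sentence) proof: both obtain the corollary by letting $t \to \infty$ in the gradient scheduling rule \eqref{eq:scheduling_gradient} and using Proposition~\ref{prop:gradient} to identify the limiting weight vector $\bigl((U_i^{\pi^\star})^{-\alpha}\bigr)_i$. You supply the supporting details that the paper leaves implicit --- uniqueness of the optimal rate vector for $\alpha>0$ via strict concavity, the measure-zero nature of ties in the $\argmax$, and the interchange of the $t\to\infty$ limit with the selection rule --- but the underlying argument is the same.
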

\bp 
The result holds as a consequence of proposition \ref{prop:gradient}, by letting $t \to \infty$ in~\eqref{eq:scheduling_gradient}. Equation \eqref{eq:scheduling_gradient} indeed defines which group is selected by the above iterative scheme as $t \to \infty$. \ep

\subsection{Threshold policies and complexity}
We also introduce a sub-class of policies called threshold policies. We say that policy $\pi \in \Pi$ is a threshold policy with threshold $c$ if, for any channel realization $\vh$ it selects all users with a channel gain larger than $c$, that is:
$$
	{\Jc}^\pi(\vh) = \{i=1,\dots,K: h_i \ge c \}.
$$
While threshold policies are in general suboptimal, they can be implemented with minimal complexity. Indeed, computing the solution of \eqref{eq:scheduling_gradient}  can be done in time ${\cal O}(K^2)$, by sorting $\vh$ and searching over the possible values of $|\Jc|$ and $\min_{j \in \Jc} h_j$ (see appendix \ref{subsection:compx} for more details). On the other hand, computing a threshold policy requires ${\cal O}(K)$ time. Furthermore, while computing \eqref{eq:scheduling_gradient} requires all users to report the value of their channel gain $h_1(t),...,h_K(t)$ up to a given accuracy, implementing a threshold policy simply requires user to report $1$ bit of information which is $\indic\{ h_i(t) \ge c \}$.

Surprisingly, as stated in Theorem \ref{th:mainres} of next section, a well designed threshold policy in fact become optimal when the number of users $K$ grows large, so that utility optimal scheduling can be achieved with both linear complexity ${\cal O}(K)$ and 1-bit feedback. 

\section{Fair scheduling for a large number of users}\label{sec:schduling_analysis}
In this section, we consider utility optimal scheduling when the number of users $K$ grows large. We show that threshold policies become optimal in this regime. 
Our result is general and applies to any value of $\alpha \ge 0$ as well as heterogeneous users where the channel gains statistics $\gamma_1,...,\gamma_K$ are arbitrary as long as they are bounded. We denote by $\underline{\gamma} = \min_{i} \gamma_i$ and $\overline{\gamma} = \max_i \gamma_i$. As a corollary, we compute the optimal threshold policy in closed form as a function of $\gamma_1,...,\gamma_K$, so that the system is indeed tractable.

\subsection{Main result}

We first state Theorem~\ref{th:mainres}, the main technical contribution of this work. That is, as the number of users grows large ($K \to \infty$), a well designed threshold policy become utility optimal, and that the optimal threshold may be derived explicitly as a function of the channel gains statistics $\gamma_1,...,\gamma_K$.

\begin{theorem}\label{th:mainres}
	Consider the solution of the optimization problem:
\begin{equation}\label{eq:optimalthreshold}
	c^\star \in \underset{c \ge 0}{\argmax} \left\{ {1 \over K} \sum_{i=1}^K g_{\alpha}\left(\log(1 + c) e^{-{c \over \gamma_i}}\right) \right\},
\end{equation}
and $\pi$ the threshold policy with threshold $c^\star$. Then the long term data rates under $\pi$ are:
$$
	U_i^\pi = {1 \over T(m,\infty)} \log(1 + c^\star) e^{-{c^\star \over \gamma_i}} + o(1) \,\, , \,\, K \to \infty.
$$

Furthermore, $\pi$ is asymptotically optimal, in the sense that:
$$
	{1 \over K} \sum_{i=1}^K  g_\alpha(U_i^\pi) =  \underset{\pi \in \Pi}{\max} \left\{{1 \over K}  \sum_{i=1}^K  g_\alpha(U_i^{\pi}) \right\} + o(1)  \,\, , \,\, K \to \infty.
$$
\end{theorem}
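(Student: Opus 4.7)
The plan is to establish Theorem~\ref{th:mainres} in two pieces. First, I would compute the long-term rate achieved by the threshold policy $\pi_{c}$ for an arbitrary threshold $c \ge 0$, which handles the rate formula. Second, using the characterization from Corollary~\ref{cor:gradient}, I would show that no policy can asymptotically exceed the best threshold policy, so that $\pi_{c^\star}$ is utility-optimal in the limit.

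For the threshold policy, write $N(c) = \sum_{i=1}^K \indic\{h_i \ge c\}$ for the random size of the selected group. The summands are independent Bernoulli with success probabilities $e^{-c/\gamma_i} \in [e^{-c/\underline{\gamma}}, e^{-c/\overline{\gamma}}]$, so a Chernoff bound yields $\PP(N(c) \ge (K/2) e^{-c/\overline{\gamma}}) = 1 - o(1)$; hence $N(c) \to \infty$ almost surely as $K \to \infty$, and Property~\ref{pt2} implies $T(m, N(c)) \to T(m,\infty) = (1-m)/m$ almost surely. By the memoryless property of the exponential, conditional on $j \in \Jc^{\pi_c}$ the excesses $h_j - c$ are independent exponentials of mean $\gamma_j \le \overline{\gamma}$, so $\min_{j \in \Jc^{\pi_c}} h_j - c$ is stochastically dominated by an $\mathrm{Exp}(\overline{\gamma}/N(c))$ random variable, which converges to $0$ in probability. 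Since the integrand defining $U_i^{\pi_c}$ is uniformly dominated by $\log(1+h_i)/T(m,1)$, dominated convergence gives
\[
U_i^{\pi_c} = \frac{1}{T(m,\infty)} \log(1+c)\, e^{-c/\gamma_i} + o(1), \quad K \to \infty,
\]
and setting $c = c^\star$ yields the claimed rate formula.

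For asymptotic optimality, I would invoke Corollary~\ref{cor:gradient}: at each $\vh$, $\pi^\star$ selects the group maximizing $\frac{\log(1+\min_{j \in \Jc} h_j)}{T(m,|\Jc|)} \sum_{i \in \Jc} w_i^\star$ with shadow prices $w_i^\star = 1/(U_i^{\pi^\star})^\alpha$. A swap argument leveraging Property~\ref{pt3} shows that this maximization effectively has threshold form, $\Jc^\star(\vh) = \{i : h_i \ge c^\star(\vh)\}$ for some state-dependent $c^\star(\vh)$. As $K \to \infty$, the LLN applied to both $N(c)/K$ and $(1/K)\sum_{i : h_i \ge c} w_i^\star$ concentrates these quantities around their means, and the per-realization optimization collapses to a deterministic one-dimensional problem in $c$. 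Substituting the self-consistent rates $U_i^{\pi^\star} = \log(1+c) e^{-c/\gamma_i}/T(m,\infty) + o(1)$ reduces the first-order condition to exactly \eqref{eq:optimalthreshold}, so $U_i^{\pi^\star}$ and $U_i^{\pi_{c^\star}}$ share the same limit; continuity of $g_\alpha$ then yields equality of the two utilities up to $o(1)$.

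The \emph{main technical obstacle} is uniform control of the shadow prices $w_i^\star$ in $K$. One needs a positive lower bound on $U_i^{\pi^\star}$, obtained by comparing with the feasible policy $\pi_{c^\star}$ (which already guarantees $U_i \gtrsim \log(1+c^\star) e^{-c^\star/\overline{\gamma}}/T(m,\infty)$), and an upper bound from the trivial point-to-point rate $U_i^{\pi} \le \EE[\log(1+h_i)]/T(m,1)$. Once these are in place, the case $\alpha = 0$ (where the weights trivialize) and $\alpha = 1$ (where $g_\alpha = \log$, handled by continuity of $\alpha \mapsto g_\alpha$) follow without new ideas.
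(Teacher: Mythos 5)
Your overall structure (compute the threshold-policy rate, show the optimal policy collapses to a threshold rule, argue concentration via the LLN) does track the paper's three proof elements, but there is a genuine gap in what you call the ``main technical obstacle,'' and it is the part the paper labels the most involved step.

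You claim a uniform lower bound $U_i^{\pi^\star} \ge C_1 > 0$ follows ``by comparing with the feasible policy $\pi_{c^\star}$,'' on the grounds that $\pi_{c^\star}$ already guarantees each user roughly $\log(1+c^\star)e^{-c^\star/\overline{\gamma}}/T(m,\infty)$. That inference does not hold: the optimal policy $\pi^\star$ maximizes the \emph{aggregate} utility ${1 \over K}\sum_i g_\alpha(U_i^\pi)$, which only tells you $\sum_i g_\alpha(U_i^{\pi^\star}) \ge \sum_i g_\alpha(U_i^{\pi_{c^\star}})$. It gives no per-user Pareto domination. For $\alpha=0$ (sum rate) the utility is linear and the optimizer could in principle starve an individual user entirely; for intermediate $\alpha$ no simple comparison argument yields a per-user bound either. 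The bound is needed precisely to control the weights $1/(U_i^{\pi^\star})^\alpha$ appearing in Corollary~\ref{cor:gradient}, so without it the rest of your concentration argument (bounding $\var(\psi(c,\vh))$, bounding $\min$ and $\max$ of the weights) has nothing to stand on. The paper's actual argument is considerably more delicate: order users so that $U_1$ is smallest, analyze the group $\Jc_1^\star(\vh)$ selected when user $1$ is \emph{excluded}, show via two threshold constants $c_1,c_2$ (constructed from $\underline{\gamma},\overline{\gamma},m$ alone, using a median argument plus Markov's inequality) that $\PP(\min_{i\in\Jc_1^\star} h_i \in [c_1,c_2]) \ge 1/4$, and then exploit the fact that $1/(U_1)^\alpha$ is the largest weight to conclude that whenever $h_1 \ge c_2$ and the excluded-user minimum lies in $[c_1,c_2]$, user $1$ must be included in $\Jc^\star(\vh)$ by an exchange argument. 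Independence of $h_1$ from the event then gives $U_1 \ge {1\over 4}\log(1+c_1)e^{-c_2/\underline{\gamma}}/T(m,\infty)$. Your proposal skips this entirely, and the shortcut offered in its place is invalid.

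Two smaller remarks. First, in the paper the asymptotic size of $\Jc^\star(\vh)$ (Proof Element~2) is itself a consequence of the uniform bounds $C_1 \le U_i^{\pi^\star} \le C_2$, so the logical order matters: you cannot appeal to concentration of the selected-set size \emph{before} having the rate bounds. Second, your remark about the excesses $h_j - c$ being exponential is not needed: under one-bit feedback the transmitter only knows $\indic\{h_i\ge c\}$, and the paper evaluates the threshold policy at rate $\log(1+c^\star)$ directly; convergence of $\min_{j\in\Jc} h_j$ to $c^\star$ is harmless but superfluous.
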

The proof of theorem~\ref{th:mainres} is long and technical, and is fully detailed in the next subsections. A summary of the proof technique is found in subsection~\ref{subsec:proof_summary}.

\subsection{Optimal threshold}
We now show that, for $\alpha \ge 1$ the optimal threshold defined in~\eqref{eq:optimalthreshold} reduces to the maximization of a concave function, so that it can be computed efficiently using a local search method such as Newton's method.
\begin{proposition}\label{prop:optimalthreshold}
	Consider $c^\star$ the optimal threshold as defined in~\eqref{eq:optimalthreshold}. For $\alpha = 1$, the optimal threshold is given by:
	$$
		c^\star = e^{W_0\left(  K (\sum_{i=1}^K (1/\gamma_i))^{-1} \right)} - 1,
	$$ 
	with $W_0$ the Lambert $W$ function. For $\alpha \ge 1$, the optimal threshold is the unique solution to the equation:
		$$
			(1+c)\log(1+c) = {\sum_{i=1}^K  e^{-{c(1-\alpha) \over \gamma_i}} \over \sum_{i=1}^K  {1 \over \gamma_i} e^{-{c(1-\alpha) \over \gamma_i}}}.
		$$
\end{proposition}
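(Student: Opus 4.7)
The plan is to reduce the proposition to a first-order optimality condition plus a uniqueness argument. I would first argue that the maximizer in \eqref{eq:optimalthreshold} lies in the open interval $(0,\infty)$: as $c\to 0^+$ and as $c\to\infty$, the argument $\log(1+c)e^{-c/\gamma_i}$ tends to $0$ (for large $c$ the exponential decay dominates the logarithm), and $g_\alpha(x)\to-\infty$ as $x\to 0^+$ for every $\alpha\ge 1$. Since the objective is continuous and bounded above on $(0,\infty)$, the maximum is attained at some interior $c^\star$ where the derivative vanishes.

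The derivation of the displayed equation is then a chain-rule calculation using $g_\alpha'(x)=x^{-\alpha}$. Setting $f_i(c)=\log(1+c)e^{-c/\gamma_i}$ and observing that $f_i'(c)/f_i(c)=\frac{1}{(1+c)\log(1+c)}-\frac{1}{\gamma_i}$, one obtains
\[
    \frac{d}{dc}\,g_\alpha(f_i(c)) \;=\; [\log(1+c)]^{1-\alpha}\,e^{-c(1-\alpha)/\gamma_i}\left[\frac{1}{(1+c)\log(1+c)} - \frac{1}{\gamma_i}\right].
\]
Summing over $i$, factoring out the terms independent of $i$, and rearranging yields exactly $(1+c)\log(1+c)=\sum_i e^{-c(1-\alpha)/\gamma_i}\big/\sum_i\gamma_i^{-1}e^{-c(1-\alpha)/\gamma_i}$, which is the stated equation.

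The Lambert-$W$ formula for $\alpha=1$ is a direct specialization: with $g_1(x)=\log x$ the objective equals $\log\log(1+c)-(c/K)\sum_i\gamma_i^{-1}$ up to a constant, whose stationary equation is $(1+c)\log(1+c)=K/\sum_i\gamma_i^{-1}$. Substituting $u=\log(1+c)$ turns this into $ue^u=K(\sum_i 1/\gamma_i)^{-1}$, inverted by the principal branch $W_0$ to give $c^\star=\exp\bigl(W_0(K(\sum_i 1/\gamma_i)^{-1})\bigr)-1$.

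The main obstacle is uniqueness of the positive root for $\alpha>1$, which I would handle by a monotonicity comparison. The left-hand side $(1+c)\log(1+c)$ is strictly increasing from $0$ to $\infty$, so it suffices to show that the right-hand side is non-increasing on $(0,\infty)$ and bounded below by a positive constant. Writing the RHS as $1/\EE_w[1/\gamma]$ with weights $w_i(c)\propto e^{c(\alpha-1)/\gamma_i}$, a short covariance calculation (differentiating the ratio defining $\EE_w[1/\gamma]$) gives $\frac{d}{dc}\EE_w[1/\gamma]=(\alpha-1)\,\mathrm{Var}_w(1/\gamma)\ge 0$. Hence the RHS is non-increasing, equal to the harmonic mean $K/\sum_i 1/\gamma_i$ at $c=0$, and bounded below by $\min_i\gamma_i>0$ as $c\to\infty$. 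A strictly increasing function starting at $0$ and a non-increasing positive function cross at exactly one point, which must coincide with the interior critical point $c^\star$; this identifies $c^\star$ as the unique solution to the displayed equation.
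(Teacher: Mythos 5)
Your derivation is correct, and the first-order condition and Lambert-$W$ inversion match the paper. Where you diverge is the uniqueness argument for $\alpha>1$. The paper transforms the objective into the minimization of $(1-\alpha)\log\log(1+c) + \log\bigl(\sum_i e^{-c(1-\alpha)/\gamma_i}\bigr)$ and observes that both terms are convex (the second via the log-sum-exp function), so there is a unique stationary point. You instead establish uniqueness directly from the stationary equation: the left-hand side $(1+c)\log(1+c)$ is strictly increasing from $0$ to $\infty$, while the right-hand side, read as the reciprocal of an exponentially tilted average $\EE_w[1/\gamma]$, is non-increasing because $\frac{d}{dc}\EE_w[1/\gamma]=(\alpha-1)\mathrm{Var}_w(1/\gamma)\ge 0$, and it stays pinned in $[\min_i\gamma_i,\max_i\gamma_i]$. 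This is a valid single-crossing argument. The two routes are closely related (your covariance identity is exactly the mechanism behind convexity of log-sum-exp), but yours is more elementary and self-contained, avoiding the $\argmax\to\argmin$ and logarithm manipulations and giving an explicit picture of where the two curves cross; the paper's version is a one-liner once the convexity of log-sum-exp is taken as known. One small point worth adding for completeness: your interiority argument invokes $g_\alpha(x)\to-\infty$ as $x\to 0^+$, which holds for $\alpha\ge 1$ but not for $\alpha<1$; since the proposition only concerns $\alpha\ge 1$ this is fine, but it is worth stating the restriction explicitly when you assert the interior maximum.
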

\bp
	In all cases, it is noted that $0 < c^\star < \infty$. Consider $\alpha = 1$. By definition, since $g_{\alpha}(x) = \log(x)$:
	\begin{align*}
		c^\star &\in \underset{c \ge 0}{\argmax} \left\{ {1 \over K} \sum_{i=1}^K g_{\alpha}(\log(1 + c) e^{-{c \over \gamma_i}}) \right\}, \\
			&=  \underset{c \ge 0}{\argmax} \left\{ \log\log(1 + c) - {c \over K} \sum_{i=1}^K {1 \over \gamma_i} \right\}.
	\end{align*}
	Since $c \mapsto \log\log(1 + c)$ is strictly concave, mapping $c \mapsto \log\log(1 + c) - {c \over K} \sum_{i=1}^K {1 \over \gamma_i}$ is strictly concave, hence it admits a unique local maximum which is $c^\star$. The optimal threshold $c^\star$ is thus the unique point at which the derivative is null. Differentiating we get:
	$$
		(1+c^\star)\log(1+c^\star) = K \left(\sum_{i=1}^K {1 \over \gamma_i} \right)^{-1}.
	$$
	The result follows by definition of the Lambert function $W_0$.
	
	Now consider $\alpha > 1$, so that $1 - \alpha < 0$. By definition, since $g_{\alpha}(x) = {x^{1-\alpha} - 1 \over 1 - \alpha}$:
	\begin{align*}
		c^\star &\in \underset{c \ge 0}{\argmax} \left\{ {1 \over K} \sum_{i=1}^K g_{\alpha}(\log(1 + c) e^{-{c \over \gamma_i}}) \right\}, \\
			&= \underset{c \ge 0}{\argmin} \left\{ \sum_{i=1}^K \log(1 + c)^{1-\alpha} e^{-{c(1-\alpha) \over \gamma_i}} \right\}, \\
		&= \underset{c \ge 0}{\argmin} \left\{ (1-\alpha)\log\log(1 + c) + \log\left(\sum_{i=1}^K  e^{-{c(1-\alpha) \over
                \gamma_i}} \right) \right\},
		\end{align*}
		where we took the logarithm to obtain the last expression. Now, since $\alpha > 1$, $c \mapsto (1-\alpha)\log\log(1 + c)$ is convex, and so is $c \mapsto \log\left(\sum_{i=1}^K  e^{-{c(1-\alpha) \over \gamma_i}} \right)$ (log-sum-exp function, see \cite{boyd2004convex}). Hence the above admits a single local minimum, which equals $c^\star$ and may be found by solving:
		$$
			(1+c)\log(1+c) = {\sum_{i=1}^K  e^{-{c(1-\alpha) \over \gamma_i}} \over \sum_{i=1}^K  {1 \over \gamma_i} e^{-{c(1-\alpha) \over \gamma_i}}}.
		$$
\ep

\subsection{Proof element 1: lower bound on the rates}

The first step towards proving Theorem~\ref{th:mainres} is to show that the rates allocated by $\alpha$-fair scheduling are upper and lower bounded by two constants, so that $\min_i 1/(U_i^{\pi^\star})^{\alpha}$ and $\max_i 1/(U_i^{\pi^\star})^{\alpha}$ are of the same order even as $K \to \infty$. This is in fact the step of the proof which is the most involved.

\begin{proposition}
	 There exists $0 < C_1(\underline{\gamma},\overline{\gamma}) < C_2(\underline{\gamma},\overline{\gamma}) < \infty$ such that for all $K \ge 0$ and all $i=1,\dots,K$:
$$
	 C_1(\underline{\gamma},\overline{\gamma}) \le U_i^{\pi^\star} \le C_2(\underline{\gamma},\overline{\gamma}).
$$ 
\end{proposition}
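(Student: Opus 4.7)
The plan is to handle the two inequalities separately: the upper bound is a one-liner depending only on $\overline\gamma$, while the lower bound, which is the genuine heart of the proposition, combines a benchmark-policy comparison with the structural fixed-point characterization of the optimal policy from Corollary~\ref{cor:gradient}.

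For the upper bound, I would use Property~\ref{pt1} to write $T(m,|\Jc^\pi(\vh)|)\ge T(m,1)=1-m$, and remark that $\min_{j\in\Jc^\pi(\vh)} h_j\le h_i$ whenever $i\in\Jc^\pi(\vh)$. Together with Jensen's inequality and $\EE[h_i]=\gamma_i\le\overline\gamma$, this gives, for every admissible policy $\pi$,
\[
U_i^\pi \;\le\; \frac{\EE[\log(1+h_i)]}{1-m} \;\le\; \frac{\log(1+\overline\gamma)}{1-m} \;=:\; C_2(\overline\gamma),
\]
in particular for $\pi=\pi^\star$, uniformly in $K$ and $i$.

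For the lower bound, I would proceed in two steps. \emph{Benchmark step.} Fix any $c_0>0$ and consider the threshold policy $\pi_{c_0}$. On the event $\{h_i\ge c_0\}$, which has probability at least $e^{-c_0/\underline\gamma}$, user $i$ belongs to $\Jc^{\pi_{c_0}}(\vh)$ with $\min_{j\in\Jc^{\pi_{c_0}}(\vh)} h_j\ge c_0$, while $T(m,|\Jc^{\pi_{c_0}}(\vh)|)\le T(m,\infty)=(1-m)/m$ by Property~\ref{pt1}. Hence $U_i^{\pi_{c_0}}\ge L$ for every $i$, with
\[
L \;=\; \frac{m\log(1+c_0)}{1-m}\, e^{-c_0/\underline\gamma} \;>\; 0,
\]
so that, by monotonicity of $g_\alpha$ and optimality of $\pi^\star$, $\frac{1}{K}\sum_j g_\alpha(U_j^{\pi^\star})\ge g_\alpha(L)$. \emph{Transfer step.} Since an average lower bound does not by itself prevent some single $U_i^{\pi^\star}$ from being arbitrarily small, I would then invoke Corollary~\ref{cor:gradient}: setting $w_j=(U_j^{\pi^\star})^{-\alpha}$, the optimal group satisfies $\Jc^\star(\vh)\in\argmax_{\Jc}\frac{\log(1+\min_{j\in\Jc} h_j)}{T(m,|\Jc|)}\sum_{\ell\in\Jc} w_\ell$. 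Assuming for contradiction that $U_i^{\pi^\star}\to 0$ along some subsequence of $K$, we would have $w_i\to\infty$; comparing on the event $\{h_i\ge c_0\}$ against the singleton $\Jc=\{i\}$ (which alone attains value $w_i\log(1+c_0)/(1-m)$) then forces $\Jc^\star(\vh)$ to contain $i$ and $\min_{j\in\Jc^\star(\vh)} h_j$ to remain bounded below by a positive constant depending only on $\underline\gamma,\overline\gamma,m,\alpha$. Substituting back into the definition of $U_i^{\pi^\star}$ contradicts $U_i^{\pi^\star}\to 0$ and yields the desired $C_1=C_1(\underline\gamma,\overline\gamma)>0$.

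The main obstacle, as the authors themselves flag, is the transfer step. The quantitative difficulty is that even when $w_i$ is very large, the optimizer $\Jc^\star(\vh)$ can still be ``diluted'' by users with much smaller $h_j$ whose weights $w_j$ are non-negligible (only the crude lower bound $w_j\ge C_2^{-\alpha}$ is free from the already-established upper bound); such dilution would collapse $\min_{j\in\Jc^\star(\vh)} h_j$ and therefore the rate $i$ earns when selected. Ruling out these dilution events uniformly in $K$ requires extremal estimates on $(h_j)_{j\ne i}$ combined with a careful two-sided use of the upper bound to control $\sum_{j\ne i} w_j$. The degenerate case $\alpha=0$ needs a separate argument, since then $w_j\equiv 1$ and Corollary~\ref{cor:gradient} is non-informative: in that regime the sum-rate optimal set is simply the top-$k$ subset of $\vh$, and the assumption $\gamma_i\ge\underline\gamma$ alone suffices to ensure that user $i$ belongs to it with probability bounded below uniformly in $K$.
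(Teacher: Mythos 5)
Your upper bound is fine and matches the paper's (Jensen plus $T(m,1)=1-m$). Your ``benchmark step'' is also correct: the threshold policy $\pi_{c_0}$ gives $U_i^{\pi_{c_0}}\ge L>0$ for every $i$, hence the optimal utility is at least $g_\alpha(L)$. But as you yourself note, this average bound does not prevent some $U_i^{\pi^\star}$ from vanishing with $K$ (the contribution of a single tiny $U_i$ to $\frac1K\sum_j g_\alpha(U_j)$ is weighted by $1/K$ and can be swamped), so the ``transfer step'' is where the proposition actually lives, and this is precisely where your proposal has a genuine gap.

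The gap is the dilution phenomenon you flag in your last paragraph, and your proposal does not resolve it. First, the singleton comparison shows only that $f(\Jc^\star(\vh),\vh)\ge f(\{i\},\vh)$; it does not force $i\in\Jc^\star(\vh)$ (a large group with moderate weights can beat the singleton without containing $i$), so the assertion that the comparison ``forces $\Jc^\star(\vh)$ to contain $i$'' is not justified. Second, and more importantly, even if $i$ were included, nothing in your argument keeps $\min_{j\in\Jc^\star(\vh)}h_j$ away from zero: other users $j$ with very small $h_j$ but moderate $w_j$ may be pulled into $\Jc^\star$, and you only know $w_j\ge C_2^{-\alpha}$, which is the useless direction. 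The remedy you gesture at (``extremal estimates on $(h_j)_{j\ne i}$ combined with a careful two-sided use of the upper bound to control $\sum_{j\ne i}w_j$'') is not carried out and, as stated, runs into a circularity: a two-sided control of $\sum_{j\ne i}w_j$ requires a lower bound on the $U_j$, which is the very quantity being proved. Your suggested special treatment of $\alpha=0$ is also unnecessary in the paper's proof, which is uniform in $\alpha\ge0$ (the weights $w_j\equiv1$ pose no obstruction to the swap argument below).

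The paper's resolution of the transfer step is structurally different and is worth noting. It orders users so that $U_1\le\dots\le U_K$ (so $w_1$ is the \emph{largest} weight) and studies the \emph{leave-one-out} maximizer $\Jc_1^\star(\vh)\in\argmax_{\Jc\subseteq\{2,\dots,K\}}f(\Jc,\vh)$, which is independent of $h_1$. Comparing $\Jc_1^\star$ against the half-threshold group $\{i\ge2:h_i\ge\gamma_i\log2\}$ gives a probability-$\tfrac12$ lower bound on $f(\Jc_1^\star,\vh)$ of the form $z=\tfrac{\bar U}{2T(m,\infty)}\log(1+\underline\gamma\log2)$ with $\bar U=\sum_{i\ge2}w_i$. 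The key step you are missing is then a two-sided pinning of $\min_{j\in\Jc_1^\star}h_j$: combining the event $\{f(\Jc_1^\star,\vh)\ge z\}$ with elementary upper bounds on $f$ (in terms of the minimum channel gain in $\Jc_1^\star$) shows that $\min_{j\in\Jc_1^\star(\vh)}h_j\in[c_1,c_2]$ with probability at least $\tfrac14$, where $c_1,c_2$ depend only on $\underline\gamma,\overline\gamma,m$. This is exactly what rules out dilution: the optimizer cannot afford to include users with channel gain below $c_1$ because that would make $f$ fall below its median value $z$. Finally, on the independent event $\{h_1\ge c_2\}$, a swap argument (replace any $j\in\Jc^\star$ by user $1$, which has both a larger weight and a larger channel gain than the current minimum, strictly improving $f$) shows $1\in\Jc^\star(\vh)$, yielding $U_1\ge\tfrac14\,T(m,\infty)^{-1}\log(1+c_1)e^{-c_2/\underline\gamma}$. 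The leave-one-out device is what breaks the circularity: by removing user $1$ before controlling $\Jc_1^\star$, one never needs to assume anything about $U_1$.
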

\bp Without loss of generality, we may order users to ensure $U_1^{\pi^\star} \le ... \le U_K^{\pi^\star}$. Throughout the proof we consider the optimal policy $\pi^\star$ and, to ease notation, we denote $U_i^{\pi^\star}$ by $U_i$. We define the function:
\begin{align}\label{eq:f}
f(\Jc,\vh) = {1 \over T(m,|\Jc|)} \log(1 + \min_{j \in \Jc} h_j) \sum_{i=1}^K {\indic\{i \in \Jc \} \over (U_i)^{\alpha}}.
\end{align}
As shown in corollary~\ref{cor:gradient}, under the optimal policy $\pi^\star$, the chosen group is
$$
	\Jc^\star(\vh) \in \arg \max_{\Jc \subset \{1,...,K\}} f(\Jc,\vh).
$$
As a first step, we control the chosen group $\Jc^\star$, in an alternative system when user $1$ is ignored. We define  
$
	\Jc_1^\star(\vh) \in \arg \max_{\Jc \subset \{2,...,K\}} f(\Jc,\vh),
$
the maximizer of $f$ if user $1$ is ignored. Denote by  
$
\bar{U} = \sum_{i=2}^K {1 \over (U_i)^{\alpha}},
$
the sum of weights of all users except user $1$. Define 
$
z = {\bar{U} \over 2 T(m,\infty)} \log(1 +  \underline{\gamma} \log 2).
$
  We now prove the following inequality:
$$
	\PP\left( f(\Jc_1^\star(\vh),\vh) \le z \right) \le {1 \over 2}.
$$
Define the group:
$
	\Jc_1(\vh) = \{i \ge 2: h_i \ge \gamma_i \log 2 \}.
$ 
Let us lower bound $f(\Jc_1(\vh),\vh)$. By definition, $j \in  \Jc_1(\vh)$ implies $h_j \ge \gamma_j \log 2 \ge  \underline{\gamma} \log 2$, hence:
$$ 
\log(1 +  \underline{\gamma} \log 2) \le \log(1 + \min_{j \in \Jc_1(\vh)} h_j),
$$
and further using Property \ref{pt1} implying $T(m,\infty) > T(m,\Jc_1(\vh))$, we obtain the lower bound:
$$
	{1 \over T(m,\infty)} \log(1 +  \underline{\gamma} \log 2) \sum_{i=2}^K {\indic\{h_i \ge \gamma_i \log 2\} \over (U_i)^\alpha} \le f(\Jc_1(\vh),\vh).
$$
Define the random variable:
$$
	Z = {1 \over T(m,\infty)} \log(1 +  \underline{\gamma} \log 2) \sum_{i=2}^K {\indic\{h_i \ge \gamma_i \log 2\} \over (U_i)^\alpha}.
$$
By definition of $\Jc_1^\star(\vh)$, we have $f(\Jc_1^\star(\vh),\vh) \ge f(\Jc_1(\vh),\vh)$, so that:
\begin{align*}
	Z \le f(\Jc_1(\vh),\vh) \le f(\Jc^\star_1(\vh),\vh).
\end{align*}
Since $h_i$ follows an exponential distribution with mean $\gamma_i$, we have $\PP(h_i \ge \gamma_i \log 2) = {1 \over 2}$ and since the channel realizations are independent across users, the random variables $\indic\{h_i \ge \gamma_i \log 2\}$ and $\indic\{h_{i'} \ge \gamma_{i'} \log 2\}$ are independent whenever $i \ne i'$. Therefore:
$$
	\EE(Z) = {1 \over T(m,\infty)} \log(1 +  \underline{\gamma} \log 2) \sum_{i=2}^K {\PP(h_i \ge \gamma_i \log 2) \over (U_i)^\alpha} = z,
$$
and $Z$ is a weighted sum of Bernoulli independent random variables with mean ${1 \over 2}$ so that $Z$ is symmetrical, i.e. $Z - z$ has the same distribution as $z - Z$. Therefore: $\PP(Z \le z) = \PP(Z \ge z) = {1 \over 2}$ and:
$$
		\PP\left( f(\Jc_1^\star(\vh),\vh) \le z \right)  \le \PP(Z \le z) = {1 \over 2}. 
$$
We now control the value of $\min_{i \in \Jc_1^\star(\vh)} h_i$. Choose any $c_1,c_2$ such that both of the conditions below are satisfied:
\begin{align*}
	\text{(i)    } & \log(1 + c_1) < {T(m,1) \over 2 T(m,\infty)} \log(1 +  \underline{\gamma} \log 2); \text{ and } \\
	\text{(ii)    } & {2 T(m,\infty)  \int_{c_2}^\infty (\log(1+y)/\overline{\gamma}) e^{-y/\overline{\gamma}}dy \over T(m,1)
        \log(1 +  \underline{\gamma} \log 2)} \le {1 \over 4}. 
\end{align*}
It is noted that we may indeed choose $c_1,c_2$ in that way since $c \mapsto \log(1 + c)$ is increasing and vanishes for $c=0$, and since $c \mapsto \int_{c}^\infty (\log(1+y)/\overline{\gamma}) e^{-y/\overline{\gamma}}dy$ is decreasing and vanishes for $c \to \infty$. It is also noted that $c_1$ $c_2$ may be chosen only based on the value of $\underline{\gamma}$ and $\overline{\gamma}$ and $m$.

Assume that $\min_{i \in \Jc_1^\star(\vh)} h_i \le c_1$ and that $f(\Jc_1^\star(\vh),\vh) \ge z$. If this event occurs, using the facts that (a) $\log(1 + \min_{i \in \Jc_1^\star(\vh)} h_i) \le \log(1 + c_1)$, and (b) $T(m,|\Jc_1^\star(\vh)|) \ge T(m,1)$ since Property \ref{pt1}, and (c) that $\sum_{i=1}^K {\indic\{i \in \Jc_1^\star(\vh)\}  \over (U_i)^\alpha} \le \bar{U}$, we obtain the upper bound:
$$
	  f(\Jc_1^\star(\vh),\vh) \le {\bar{U} \over T(m,1)}  \log(1 + c_1).
$$
In summary, if $\min_{i \in \Jc_1^\star(\vh)} h_i \le c_1$ and $f(\Jc_1^\star(\vh),\vh) \ge z$ we have $z \le {\bar{U} \over T(m,1)}  \log(1 + c_1)$ and replacing $z$ with its definition:
$$
	{\bar{U} \over 2 T(m,\infty)} \log(1 +  \underline{\gamma} \log 2) \le {\bar{U} \over T(m,1)}  \log(1 + c_1),
$$
which is equivalent to
$$
	 {T(m,1) \over 2 T(m,\infty)} \log(1 + \underline{\gamma} \log 2) \le \log(1 + c_1),
$$
a contradiction with (i) the definition of $c_1$. We have hence proven that $f(\Jc_1^\star(\vh),\vh) \ge z$ implies $\min_{i \in \Jc_1^\star(\vh)} h_i \ge c_1$.

Now assume that $\min_{i \in \Jc_1^\star(\vh)} h_i \ge c_2$ and that $f(\Jc_1^\star(\vh),\vh) \ge z$. If this event occurs, using the facts that  
\begin{align*}
\text{(a)} ~~\log(1 + \min_{j \in \Jc_1^\star(\vh)} h_j) \indic\{i \in \Jc_1^\star(\vh)\} &\le \log(1 + h_i) \indic\{i \in \Jc_1^\star(\vh)\} \\
&\le \log(1 + h_i) \indic\{h_i \ge c_2\},
\end{align*}
since $i \in \Jc_1^\star(\vh)$ implies $h_i \ge c_2$, and (b) $T(m,|\Jc_1^\star(\vh)|) \ge T(m,1)$ since Property \ref{pt1}, we obtain the upper bound:
$$
	f(\Jc_1^\star(\vh),\vh) \le {1 \over T(m,1)} \sum_{i \ge 2} {\log(1 + h_i) \indic\{h_i \ge c_2\} \over (U_i)^\alpha} \equiv Z'.
$$
In summary $\min_{i \in \Jc_1^\star(\vh)} h_i \ge c_2$ and $f(\Jc_1^\star(\vh),\vh) \ge z$ implies $z \le f(\Jc_1^\star(\vh),\vh) \le Z'$. Let us upper bound the expectation of $Z'$. Since $h_i$ has exponential distribution with mean $\gamma_i$ we have:
\begin{align*}
\EE(\log(1+h_i) \indic\{h_i \ge c_2\}) &= \int_{c_2}^\infty (\log(1+y)/\gamma_i) e^{-y/\gamma_i}dy \\
&\le \int_{c_2}^\infty (\log(1+y)/\overline{\gamma}) e^{-y/\overline{\gamma}}dy.
\end{align*}
Hence:
$$
	\EE(Z') \le  {\bar{U} \int_{c_2}^\infty (\log(1+y)/\overline{\gamma}) e^{-y/\overline{\gamma}}dy\over T(m,1)}  .
$$
Using Markov's inequality, we get:
\begin{align*}
& \PP\left( Z' \ge  z \right) \le {\EE(Z') \over z} \\
&\le {2 T(m,\infty)  \int_{c_2}^\infty (\log(1+y)/\overline{\gamma}) e^{-y/\overline{\gamma}}dy \over T(m,1) \log(1 +  \underline{\gamma} \log 2)} \\ & \le {1 \over 4},
\end{align*}
using the definition of $c_2$ for the final inequality.

In conclusion, we have proven that:
\begin{align*}
	 \PP\left(  \min_{i \in \Jc_1^\star(\vh)} h_i \not\in [c_1,c_2]\right) &= \PP\left(  \min_{i \in \Jc_1^\star(\vh)} h_i \not\in [c_1,c_2]; f(\Jc_1^\star(\vh),\vh) \le z\right) \\ 
	 &+ \PP\left(  \min_{i \in \Jc_1^\star(\vh)} h_i \not\in [c_1,c_2];f(\Jc_1^\star(\vh),\vh) \ge z \right)\\
	 &\leq \PP\left( f(\Jc_1^\star(\vh),\vh) \le z\right)+\PP\left(  \min_{i \in \Jc_1^\star(\vh)} h_i \not\in [c_1,c_2];f(\Jc_1^\star(\vh),\vh) \ge z \right)\\
	 &= \PP\left( f(\Jc_1^\star(\vh),\vh) \le z\right)+\PP\left(  \min_{i \in \Jc_1^\star(\vh)} h_i \geq c_2;f(\Jc_1^\star(\vh),\vh) \ge z \right)\\
	&\le \PP\left(Z \le z \right) + \PP\left( Z' \ge  z \right) \\
	&\le {1 \over 2}+ {1 \over 4} = {3 \over 4},
\end{align*}
hence:
$$
	\PP\left(  \min_{i \in \Jc_1^\star(\vh)} h_i \in [c_1,c_2]\right) \ge {1 \over 4}.
$$
The second step involves lower bounding $U_1$, using the previous result on the fluctuations of $\min_{i \in \Jc_1^\star(\vh)} h_i$. We will use the four following facts: (a) Since $\Jc_1^\star(\vh)$ depends solely on $h_2,...,h_K$, the event $\min_{i \in \Jc_1^\star(\vh)} h_i \in [c_1,c_2]$ is independent of $h_1$, (b) When both $\min_{i \in \Jc_1^\star(\vh)} h_i \in [c_1,c_2]$, and $h_1 > c_2$, then $1 \in \Jc^\star(\vh)$ since ${1 \over (U_1)^\alpha} \ge \max_{i \ge 2} {1 \over (U_i)^\alpha}$ and $\min_{i \in \Jc^\star(\vh)} h_i \le c_2 \le h_1$. Indeed, if $1 \not\in \Jc^\star(\vh)$, for any $i \in \Jc^\star(\vh)$ we have $f(\Jc^\star(\vh) \setminus \{i\} \cup \{1\},\vh) > f(\Jc^\star(\vh),\vh)$, a contradiction since $\Jc^\star(\vh)$ is a maximizer of $\Jc \mapsto f(\Jc,\vh)$, (c) Since $h_1$ has exponential distribution with mean $\gamma_1 \ge \underline{\gamma}$, $\PP(h_1 \ge c_2) = e^{-c_2/\gamma_1} \ge e^{-c_2/\underline{\gamma}}$ and (d) We have $T(m,|\Jc^\star(\vh)|) \le T(m,\infty)$ since Property \ref{pt1}.

Putting (a), (b), (c) and (d) together we get:
\begin{align*}
	U_1 &\ge {1 \over T(m,\infty)}\log(1 + c_1) \PP( \min_{i \in \Jc_1^\star(\vh)} h_i \in [c_1,c_2], h_1 \ge c_2) \\
	&= {1 \over T(m,\infty)} \log(1 + c_1) \PP( \min_{i \in \Jc_1^\star(\vh)} h_i \in [c_1,c_2])\PP(h_1 \ge c_2) \\
	&\ge {1 \over T(m,\infty)} {1 \over 4} \log(1 + c_1) e^{-c_2/\underline{\gamma}} \equiv C_1(\underline{\gamma},\overline{\gamma}).
\end{align*}
Furthermore, for any $i=1,...,K$:
\begin{align*}
	U_i &\le {1 \over T(m,1)} \EE(\log(1 + h_i)) \\
	&\le  {1 \over T(m,1)}\log(1 + \EE(h_i)) \\
	&= {1 \over T(m,1)} \log(1 + \gamma_i) \\
	&\le {1 \over T(m,1)} \log(1 + \overline{\gamma}) \equiv C_2(\underline{\gamma},\overline{\gamma}).
\end{align*}
We have proven that:
$$
	C_1(\underline{\gamma},\overline{\gamma}) \le U_i \le C_2(\underline{\gamma},\overline{\gamma})
$$
for all $i=1,...,K$ and all $K$ as announced.
\ep

\subsection{Proof element 2: asymptotic size of $\Jc$}

	From the first proof element we deduce the second one, that is, only groups $\Jc^\star(\vh)$ of large size are chosen with high probability as the number of users grows. In turn this implies that $T(m,|\Jc^\star(\vh)|) \PtoK T(m,\infty)$. This result is important, since it allows to take $T(m,|\Jc^\star(\vh)|)$ out of the equation when it comes to controlling which users are selected by the optimal policy.
\begin{proposition}
		For all $J \ge 0$ we have:
$$
	\PP(|\Jc^\star(\vh)| \ge J) \toK 1.
$$
	Furthermore, $T(m,|\Jc^\star(\vh)|) \PtoK T(m,\infty)$.
\end{proposition}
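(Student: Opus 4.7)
The plan is a contradiction argument built on the uniform bounds $C_1\le U_i^{\pi^\star}\le C_2$ from the previous proposition. Those bounds show that every weight $1/(U_i)^\alpha$ appearing in \eqref{eq:f} lies in an interval of positive constants depending only on $\underline{\gamma},\overline{\gamma},m,\alpha$; hence $f(\Jc,\vh)$ is comparable, up to such constants, to $|\Jc|\log(1+\min_{j\in\Jc}h_j)/T(m,|\Jc|)$. I will exhibit a test group whose $f$-value grows linearly in $K$ with high probability, and simultaneously show that any group of bounded size produces an $f$-value at most of order $\log\log K$. Optimality of $\Jc^\star$ will then force $|\Jc^\star(\vh)|>J$ with high probability for any fixed $J$.

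For the lower bound I reuse the same test set $\Jc_0(\vh)=\{i\in[K]:h_i\ge\gamma_i\log 2\}$ as in the previous proof. The indicators $\indic\{h_i\ge\gamma_i\log 2\}$ are i.i.d.\ Bernoulli$(1/2)$, so $|\Jc_0(\vh)|/K\to 1/2$ almost surely. Combining the deterministic lower bound $\min_{j\in\Jc_0}h_j\ge\underline{\gamma}\log 2$, the inequality $T(m,|\Jc_0|)\le T(m,\infty)$ from Property \ref{pt1}, and $U_i\le C_2$, I obtain $f(\Jc_0(\vh),\vh)\ge cK$ on an event of asymptotic probability $1$, for an explicit $c=c(\underline{\gamma},\overline{\gamma},m,\alpha)>0$. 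For the upper bound on groups of bounded size, a union bound gives $\PP(\max_i h_i\ge 2\overline{\gamma}\log K)\le 1/K$; on the complementary event, any $\Jc$ with $|\Jc|\le J$ satisfies
$$f(\Jc,\vh)\le \frac{J\,\log(1+2\overline{\gamma}\log K)}{T(m,1)\,C_1^\alpha}=\mathcal{O}(\log\log K),$$
using $\log(1+\min_{j\in\Jc}h_j)\le\log(1+\max_i h_i)$, Property \ref{pt1}, and $U_i\ge C_1$.

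Intersecting the two good events, whose joint probability still tends to $1$, and choosing $K$ large enough so that $cK$ exceeds the $\mathcal{O}(\log\log K)$ upper bound, I get $f(\Jc_0,\vh)$ strictly greater than the maximum of $f(\cdot,\vh)$ over all groups of size at most $J$. Since $\Jc^\star(\vh)$ maximizes $f(\cdot,\vh)$, this forces $|\Jc^\star(\vh)|>J$ on that event, yielding $\PP(|\Jc^\star(\vh)|\ge J)\toK 1$. The second assertion follows immediately: given $\varepsilon>0$, Property \ref{pt2} lets me pick $J$ with $T(m,\infty)-T(m,J)\le\varepsilon$, and on $\{|\Jc^\star(\vh)|\ge J\}$ Property \ref{pt1} sandwiches $T(m,|\Jc^\star(\vh)|)$ between $T(m,J)$ and $T(m,\infty)$, whence convergence in probability to $T(m,\infty)$.

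The one delicate step I anticipate is the upper bound, since $\log(1+\min_{j\in\Jc^\star}h_j)$ is a priori unbounded in $K$; the crude union bound on $\max_i h_i$ is the right tool, giving growth only as $\log\log K$, which is defeated by the linear-in-$K$ lower bound. The rest of the argument is essentially bookkeeping of the constants depending on $\underline{\gamma},\overline{\gamma},m,\alpha$.
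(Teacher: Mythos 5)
Your argument is correct and follows essentially the same route as the paper: the same test group $\Jc_0(\vh)=\{i:h_i\ge\gamma_i\log 2\}$, the same use of the uniform bounds $C_1\le U_i\le C_2$ to control the weights, the same law-of-large-numbers lower bound giving $f(\Jc_0,\vh)\gtrsim K$, and the same tail bound on the maximal channel gain to cap $\log(1+\min_{j\in\Jc^\star}h_j)$ at order $\log\log K$. The paper combines these to produce an explicit high-probability lower bound $|\Jc^\star(\vh)|\gtrsim K/\log\log K$, whereas you instead exclude all groups of size $\le J$ by showing their $f$-value is dominated; these are equivalent ways of drawing the same conclusion.
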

\bp
	Consider the following group of users:
$$
	\Jc(\vh) = \{i \ge 1: h_i \ge \gamma_i \log 2 \}.
$$ 
	Let us lower bound the value of $f(\Jc(\vh),\vh) = {1 \over T(m,|\Jc|)} \log(1 + \min_{j \in \Jc} h_j) \sum_{i=1}^K {\indic\{i \in \Jc \} \over (U_i)^{\alpha}}$ as defined in \eqref{eq:f}. Using the facts that (a) $T(m,\Jc(\vh)) \le T(m,\infty)$ due to Property \ref{pt1}, (b)  $i \in \Jc(\vh)$ implies $h_i \ge \gamma_i \log 2 \ge \underline{\gamma} \log 2$ so that $\min_{i \in \Jc(\vh)} h_i \ge \underline{\gamma} \log 2$ and (c) $U_i \le C_2(\underline{\gamma},\overline{\gamma})$ so that ${1 \over (U_i)^\alpha} \ge {1 \over C_2(\underline{\gamma},\overline{\gamma})^\alpha}$ we obtain the lower bound:
	$$
		{\log(1 + \underline{\gamma} \log 2) \over  C_2(\underline{\gamma},\overline{\gamma})^\alpha T(m,\infty)} \sum_{i=1}^K \indic\{ h_i \ge \gamma_i \log 2 \} \le f(\Jc(\vh),\vh).
	$$
	Let us upper bound the value of $f(\Jc^\star(\vh),\vh)$, using the facts that (a) $T(m,\Jc(\vh)) \ge T(m,1)$ due to Property \ref{pt2},  (b)  $U_i \ge C_1(\underline{\gamma},\overline{\gamma})$ so that ${1 \over (U_i)^\alpha} \le {1 \over C_1(\underline{\gamma},\overline{\gamma})^\alpha}$ and (c) $\min_{i \in \Jc^\star(\vh)} h_i \le \max_{i=1,...,K} h_i \le \overline{\gamma} \max_{i=1,...,K} (h_i/\gamma_i)$ we obtain:
$$
	f(\Jc^\star(\vh),\vh) \le |\Jc^\star(\vh)| {\log(1 +  \overline{\gamma} \max_{i=1,...,K} (h_i/\gamma_i)) \over  C_1(\underline{\gamma},\overline{\gamma})^\alpha T(m,1)}.
$$	
Since $\Jc^\star(\vh)$ is a maximizer of $\Jc \mapsto f(\Jc,\vh)$ we have $ f(\Jc(\vh),\vh) \le  f(\Jc^\star(\vh),\vh)$, and the two previous inequalities imply:
\begin{align*}
\log(1 + \underline{\gamma} \log 2) & {T(m,1) \over T(m,\infty)} \left({C_1(\underline{\gamma},\overline{\gamma}) \over C_2(\underline{\gamma},\overline{\gamma})}\right)^\alpha  {\sum_{i=1}^K \indic\{ h_i \ge \gamma_i \log 2 \} \over \log(1 +  \overline{\gamma} \max_{i=1,...,K} (h_i/\gamma_i))} \le |\Jc^\star(\vh)|.
\end{align*}
To finish the proof, we prove that:
$$
{\sum_{i=1}^K \indic\{ h_i \ge \gamma_i \log 2 \} \over \log(1 +  \overline{\gamma} \max_{i=1,...,K} (h_i/\gamma_i))}  \astoK \infty.
$$
Since $h_1/\gamma_1,...,h_K/\gamma_K$ are i.i.d exponentially distributed with mean $1$, we have $\PP( h_i \ge \gamma_i \log 2) = {1 \over 2}$ and the law of large numbers gives:
$$
	{1 \over K} \sum_{i=1}^K \indic\{ h_i \ge \gamma_i \log 2 \} \PtoK {1 \over 2}.
$$
Since $\frac{1}{4}<\frac{1}{2}$, we have for $K \to \infty$, with high probability,
$$\sum_{i=1}^K \indic\{ h_i \ge \gamma_i \log 2 \} \ge {K \over 4}.$$
Furthermore,
\begin{align*}
	\PP(\max_{i=1,...,K} (h_i/\gamma_i) \ge 2 \log K) &= 1 - \PP(\max_{i=1,...,K} (h_i/\gamma_i) \le 2 \log K) \\
		&= 1 - \prod_{i=1}^K  \PP( h_i/\gamma_i \le 2 \log K) \\
		&= 1 - \left(1 - {1 \over K^2}\right)^K \toK 0.
\end{align*}
Thus for $K \to \infty$, with high probability, we have
$$\max_{i=1,...,K} (h_i/\gamma_i) \le 2 \log K.$$

Hence, the following occurs with high probability:
\begin{align*}
\log(1 + \underline{\gamma} \log 2){T(m,1) \over T(m,\infty)} \left({C_1(\underline{\gamma},\overline{\gamma}) \over C_2(\underline{\gamma},\overline{\gamma})}\right)^\alpha  
{K \over  \log(1 +  2\overline{\gamma} \log K  )} \le |\Jc^\star(\vh)|.
\end{align*}
Since ${K \over \log \log K} \toK \infty$, this implies that, for all $J \ge 0$:
$$
	\PP(|\Jc^\star(\vh)| \ge J) \toK 1.
$$
Therefore, for any $J \ge 0$:
$$
	\PP\left(T(m,J) \le T(m,|\Jc^\star(\vh)|) \le T(m,\infty)\right) \toK 1.
$$
This holds for all $J$, which proves the second statement.
\ep
\subsection{Proof element 3: convergence to a deterministic equivalent}

The last proof element is to show that, when $K \to \infty$, maximizing $f(\Jc,\vh)$ reduces to a simpler, deterministic optimization problem, which we call a ``deterministic equivalent'' of the original problem. Define the following mapping:
$$
	\phi(\Jc,\vh) = \log(1 + \min_{j \in \Jc} h_j) {1 \over K} \sum_{i=1}^K {\indic\{i \in \Jc \} \over (U_i)^{\alpha}},
$$	
which corresponds to the value of ${T(m,\infty) \over K} f(\Jc,\vh) $ when $|\Jc|$ goes to infinity. Further define $\psi$:
$$
	\psi(c,\vh) = \log(1 + c) {1 \over K} \sum_{i=1}^K {\indic\{h_i \ge c \} \over (U_i)^{\alpha}},
$$
which is the value of $\phi$ when selecting only users whose channel realization is larger than $c$. It is noted that when $K\rightarrow \infty$, we have
$$
	\max_{\Jc \subset \{1,...,K\}} \phi(\Jc,\vh) = \max_{c \ge 0} \psi(c,\vh).
$$
Indeed, if $\min_{j \in \Jc} h_j = c$ for some $c$, then all users $i$ such that $h_i \ge c$ should be included in $\Jc$ in order to maximize $\phi(\Jc,\vh)$. Hence maximizing $\phi(\Jc,\vh)$ over all subsets of users $\Jc$ reduces to a simple, one-dimensonnal search over the value of $\min_{j \in \Jc} h_j = c$, that is maximizing $\psi(c,\vh)$ over $c \ge 0$. We are now left to control the value of the random quantity $\max_{c \ge 0} \psi(c,\vh)$, which is not straightforward since its maximizer $\arg\max_{c \ge 0} \psi(c,\vh)$ is typically a random variable as well. For a fixed value of $c$, we define $\Psi(c)$ which is the expected value of $\psi(c,\vh)$:
$$
\Psi(c) = \EE(\psi(c,\vh)) =  \log(1 + c) {1 \over K} \sum_{i=1}^K {e^{-c/\gamma_i}\over (U_i)^{\alpha}}.
$$
We will show that $\Psi$ constitutes a \emph{deterministic equivalent}, in the sense that maximizing $\psi(c,\vh)$ over $c \ge 0$ for a fixed value of $\vh$ yields, asmptotically with high probability, the same outcome as maximizing $\Psi(c)$ over $c \ge 0$. In other words, a concentration phenomenon occurs as the number of users grows large and channel opportunism does yield any gains over choosing all users whose channel realization is above a fixed threshold.
\begin{proposition}\label{prop:deterministic_equivalent}
	We have:
	$$
		 \max_{c \ge 0} \psi(c,\vh) \PtoK \max_{c \ge 0} \Psi(c).
	$$
\end{proposition}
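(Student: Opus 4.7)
The strategy is to prove uniform convergence $\sup_{c\ge 0}|\psi(c,\vh)-\Psi(c)|\PtoK 0$ on a compact interval $[0,C]$ independent of $K$ to which both maxima are confined; convergence of the maxima then follows immediately. Introduce
\[
F_K(c)=\frac{1}{K}\sum_{i=1}^K\frac{\indic\{h_i\ge c\}}{(U_i^{\pi^\star})^\alpha},\quad \bar F_K(c)=\frac{1}{K}\sum_{i=1}^K\frac{e^{-c/\gamma_i}}{(U_i^{\pi^\star})^\alpha},
\]
so that $\psi(c,\vh)=\log(1+c)F_K(c)$, $\Psi(c)=\log(1+c)\bar F_K(c)=\EE\psi(c,\vh)$, and both $F_K,\bar F_K$ are non-increasing in $c$. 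By the first proof element, $(U_i^{\pi^\star})^{-\alpha}\le C_1^{-\alpha}$, hence $\mathrm{Var}(F_K(c))\le C_1^{-2\alpha}/K$, and Chebyshev's inequality gives the pointwise convergence $F_K(c)-\bar F_K(c)\PtoK 0$ for every fixed $c$.

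For the localization, fix $\epsilon>0$ and pick $C=C(\epsilon,\overline{\gamma},C_1,\alpha)$ so large that $\log(1+c)e^{-c/\overline{\gamma}}/C_1^\alpha<\epsilon$ for $c\ge C$, which gives $\sup_{c\ge C}\Psi(c)\le\epsilon$ uniformly in $K$. To bound $\psi(\cdot,\vh)$ on $[C,\infty)$, I would use the fact, established in the previous proof element, that $\max_i h_i\le 2\overline{\gamma}\log K$ with probability $1-o(1)$, so $\psi(c,\vh)=0$ above this value. I would cover $[C,2\overline{\gamma}\log K]$ by the dyadic grid $c_k=C\cdot 2^k$, which contains only $O(\log\log K)$ points. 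By monotonicity of $F_K$, $\sup_{c\in[c_k,c_{k+1}]}\psi(c,\vh)\le\log(1+c_{k+1})F_K(c_k)$, and Chebyshev together with a union bound yields $F_K(c_k)\le\bar F_K(c_k)+(\log\log K)^{-2}$ for every $k$ with probability $1-o(1)$. The exponential decay of $\bar F_K(c_k)\le e^{-c_k/\overline{\gamma}}/C_1^\alpha$ against the polynomial growth of $\log(1+c_{k+1})$ then yields $\sup_{c\ge C}\psi(c,\vh)\le 2\epsilon$ with probability $1-o(1)$.

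For uniform convergence on $[0,C]$, pick a deterministic grid $0=c_0<\cdots<c_N=C$ satisfying $\bar F_K(c_k)-\bar F_K(c_{k+1})<\epsilon$ for every $k$ and every $K$; such an $N=N(\epsilon,C,\underline{\gamma},C_1)$ exists because $\bar F_K$ is uniformly Lipschitz in $K$ with constant $(\underline{\gamma}C_1^\alpha)^{-1}$. Monotonicity of both $F_K$ and $\bar F_K$ sandwiches, for $c\in[c_k,c_{k+1}]$,
\[
F_K(c_{k+1})-\bar F_K(c_k)\le F_K(c)-\bar F_K(c)\le F_K(c_k)-\bar F_K(c_{k+1}),
\]
hence $\sup_{c\in[0,C]}|F_K(c)-\bar F_K(c)|\le\epsilon+\max_{0\le k\le N}|F_K(c_k)-\bar F_K(c_k)|$. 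Chebyshev and a union bound over the $N$ grid points force the last max to vanish in probability, and multiplying by the bounded factor $\log(1+C)$ yields $\sup_{c\in[0,C]}|\psi(c,\vh)-\Psi(c)|\PtoK 0$. Combining with the localization step, for any $\epsilon>0$ one obtains $|\max_c\psi(c,\vh)-\max_c\Psi(c)|\le 3\epsilon$ with probability tending to $1$; since $\epsilon$ is arbitrary, the claim follows.

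The principal difficulty is the localization of the maximizer of $\psi(\cdot,\vh)$, which is a random function whose argmax could a priori drift to infinity with $K$. The saving grace is that $\psi$ is supported on $[0,\max_i h_i]$, which is at most $O(\log K)$ with high probability, so only $O(\log\log K)$ dyadic intervals need explicit control, and at each of them monotonicity of $F_K$ reduces the tail behavior to a pointwise Chebyshev estimate. Everything else amounts to a Glivenko--Cantelli style argument adapted to the present setting via the monotonicity of the empirical tail $F_K$.
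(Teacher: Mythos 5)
Your argument is correct, and it follows the same overall template as the paper's proof (pointwise Chebyshev concentration of $\psi(c,\vh)$ around $\Psi(c)$, plus an $\epsilon$-net over $c$ to make the convergence effectively uniform), but you handle the tail $c\to\infty$ quite differently, and both treatments are worth noting. On the finite part $[0,C]$ your sandwich via the shared monotonicity of $F_K$ and $\bar F_K$ and the uniform Lipschitz bound $|\bar F_K'|\le(\underline{\gamma}C_1^\alpha)^{-1}$ is in substance the same as the paper's use of $\log(1+\ell\delta)-\log(1+(\ell-1)\delta)\le\delta$ combined with the monotonicity of $c\mapsto\indic\{h_i\ge c\}$; both turn the sup over an interval into the value at a grid point plus an $O(\delta)$ error. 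The real divergence is on $[C,\infty)$: the paper dominates $\psi(c,\vh)$ for all $c\ge L\delta$ simultaneously by a single random variable $Y=\frac{1}{KC_1^\alpha}\sum_i\log(1+h_i)\indic\{h_i\ge L\delta\}$ (using $\log(1+c)\indic\{h_i\ge c\}\le\log(1+h_i)\indic\{h_i\ge L\delta\}$), and then controls $\EE Y$ and $\var Y$; this avoids any net on the unbounded region. You instead first restrict to $[C,2\overline{\gamma}\log K]$ via the earlier high-probability bound on $\max_i h_i$, then cover with a dyadic grid of $O(\log\log K)$ points and run Chebyshev plus a union bound with a slowly shrinking tolerance $(\log\log K)^{-2}$. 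Your route works because Chebyshev's $O(1/K)$ decay beats the $O(\log\log K)$ union bound and the $O(\log\log K)$ growth of $\log(1+c_{k+1})$; the paper's dominating-variable trick is slightly cleaner in that it sidesteps the growing grid and the associated bookkeeping entirely, at the cost of having to estimate $\var Y$ directly. Both are valid, and your version has the pedagogical virtue of making the Glivenko--Cantelli flavor of the argument explicit.
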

\bp
	We first show that, for any fixed $c$, $\psi(c,\vh)$ is concentrated around $\Psi(c)$ when $K \to \infty$. Since (a) the
        channel realizations $h_1,...,h_K$ are independent across users, and (b) $\var(\indic\{h_i \ge c \}) \le 1$, and (c) $U_i \ge C_1(\underline{\gamma},\overline{\gamma})$ for $i=1,...,K$, we have:
	\begin{align*}
		\var(\psi(c,\vh)) &=  {\log(1 + c)^2 \over K^2} \sum_{i=1}^K {\var(\indic\{h_i \ge c \}) \over (U_i)^{2 \alpha}} \\
		&\le  {\log(1 + c)^2 \over K C_1(\underline{\gamma},\overline{\gamma})^{2\alpha} } \toK 0.
	\end{align*}
	Hence, Chebychev's inequality proves that 
	$$
	\psi(c,\vh) \PtoK \EE(\psi(c,\vh)) = \Psi(c).
	$$
	 We may now lower bound $\max_{c \ge 0} \psi(c,\vh)$ as follows. Consider $\tilde{c} \in \arg\max_{c \ge 0} \Psi(c)$, then we have 
	 $\psi(\tilde{c},\vh) \le \max_{c \ge 0} \psi(c,\vh)$  
	 and since 
	 $
	 \psi(\tilde{c},\vh) \PtoK \Psi(\tilde{c}) = \max_{c \ge 0} \Psi(c),
	 $ 
	 this proves that, for all $\epsilon > 0$:
	$$
		\PP\left( \max_{c \ge 0} \Psi(c) - \epsilon \le \max_{c \ge 0} \psi(c,\vh)  \right) \toK 1.
	$$
We now upper bound $\max_{c \ge 0} \psi(c,\vh)$. We do so by splitting $[0,+\infty)$ into a finite number of intervals and control the behaviour of $c \mapsto \psi(c,\vh)$ in those intervals. Consider $\epsilon > 0$ fixed. Define $\delta > 0$, and $L \ge 0$ such that both of the following conditions are satisfied:
\begin{align*}
\text{(i) }& {1 \over C_1(\underline{\gamma},\overline{\gamma})^{\alpha}} \int_{L\delta}^{\infty}  (\log(1 + y)/\overline{\gamma}) e^{-y/\overline{\gamma}}dy \le {\epsilon \over 2}, \\
\text{(ii)} & {\delta \over C_1(\underline{\gamma},\overline{\gamma})^{\alpha}} \le {\epsilon \over 2}.
\end{align*}
Such a choice is always possible since $\int_{L\delta}^{\infty}  (\log(1 + y)/\overline{\gamma}) e^{-y/\overline{\gamma}}dy$ vanishes for $L\delta \to \infty$. Further define:
$$
	m_\ell = \begin{cases} 
						\max_{c \in [(\ell-1)\delta,\ell\delta]} \psi(c,\vh) & \text{ if } \ell=1,...,L \\
						\max_{c \in [L\delta,+\infty)} \psi(c,\vh) & \text{ if } \ell=L+1. 
			\end{cases}
$$
It is noted that $m_1,...,m_{L+1}$ are random variables and that:
$$
\max_{c \ge 0} \psi(c,\vh) = \max_{\ell=1,...,L+1} m_\ell.
$$
We may now upper bound the value of each $m_\ell$ individually. First consider $c \in [(\ell-1)\delta,\ell\delta]$, then we have:
\begin{align*}
	\psi(c,\vh) &\le  \log(1+\ell\delta){1 \over K} \sum_{i=1}^K {\indic\{h_i \ge (\ell-1)\delta \} \over (U_i)^{\alpha}}. \\
							&= \psi((\ell-1)\delta,\vh) \\
							&+ (\log(1+\ell\delta) - \log(1+(\ell-1)\delta)){1 \over K} \sum_{i=1}^K {\indic\{h_i \ge (\ell-1)\delta \} \over (U_i)^{\alpha}} \\
							&\le \psi((\ell-1)\delta,\vh) + {\delta \over C_1(\underline{\gamma},\overline{\gamma})^{\alpha}}, \\
							&\le \psi((\ell-1)\delta,\vh) + {\epsilon \over 2},
\end{align*}
since $c \mapsto \log(1+c)$ is increasing, $c \mapsto \indic\{h_i \ge c\}$ is decreasing, $\log(1+\ell\delta) \le \log(1+(\ell-1)\delta) + \delta$, and $U_i \ge C_1(\underline{\gamma},\overline{\gamma})$ for $i=1,...,K$. We have proven that:
$$
	m_{\ell} \le\psi((\ell-1)\delta,\vh) + {\epsilon \over 2} \,,\, \ell=1,...,L
$$
and since
$$
	\psi((\ell-1)\delta,\vh) \PtoK \Psi((\ell-1)\delta) \le \max_{c \ge 0} \Psi(c) \,,\, \ell=1,...,L,
$$
we have that:
$$
	\PP( m_\ell \le \max_{c \ge 0} \Psi(c) + \epsilon) \toK 1 \,,\, \ell=1,...,L.
$$
Now consider  $c \in [L\delta,\infty)$. We have the upper bound:
\begin{align*}
	\psi(c,\vh) \le {1 \over K C_1(\underline{\gamma},\overline{\gamma})^{\alpha}} \sum_{i=1}^K \log(1+h_i)\indic\{h_i \ge
        L\delta\} \equiv Y,
\end{align*}
using the fact that $U_i \ge C_1(\underline{\gamma},\overline{\gamma})$ for $i=1,...,K$ and:
\begin{align*}
	\log(1+c)\indic\{h_i \ge c\} &\le \log(1+h_i)\indic\{h_i \ge c\} \\
																&\le \log(1+h_i)\indic\{h_i \ge L\delta\}.
\end{align*}
Hence $m_{L+1} \le Y$, and we control the first and second moment of $Y$ to show that $Y$ is concentrated around its expectation. By definition of $L$ and $\delta$, since $h_i$ has exponential distribution with mean $\gamma_i$:
\begin{align*}
	\EE(Y) &= {1 \over KC_1(\underline{\gamma},\overline{\gamma})^{\alpha}} \sum_{i=1}^K \EE(\log(1+h_i)\indic\{h_i \ge L\delta\}) \\
				 &= {1 \over KC_1(\underline{\gamma},\overline{\gamma})^{\alpha}} \sum_{i=1}^K \int_{L\delta}^{\infty}  (\log(1 + y)/\gamma_i)  e^{-y/\gamma_i}dy, \\
			   &\le  {1 \over C_1(\underline{\gamma},\overline{\gamma})^{\alpha}} \int_{L\delta}^{\infty}  (\log(1 + y)/\overline{\gamma}) e^{-y/\overline{\gamma}}dy \\
			   &\le {\epsilon \over 2},
\end{align*}
	and since $h_1,...,h_K$ are independent:
\begin{align*}
	\var(Y) &= {1 \over K^2 C_1(\underline{\gamma},\overline{\gamma})^{2\alpha}} \sum_{i=1}^K \var(\log(1+h_i)\indic\{h_i \ge L\delta\}) \\
	&\le {1 \over K C_1(\underline{\gamma},\overline{\gamma})^{2\alpha}} \int_{0}^{+\infty} (\log(1 + y)^2/\overline{\gamma}) e^{-y/\overline{\gamma}}dy \toK 0
\end{align*}	
using the fact that for $i=1,..,.K$:
\begin{align*}
	 \var(\log(1+h_i)  \indic\{h_i \ge L\delta\}) &\le \EE(\log(1+h_i)^2\indic\{h_i \ge L\delta\}^2) \\
	 &\le \EE(\log(1+h_i)^2)\\ 
	 &= \int_{0}^{+\infty} (\log(1 + y)^2/\gamma_i) e^{-y/\gamma_i}dy \\
	 &\le \int_{0}^{+\infty} (\log(1 + y)^2/\overline{\gamma}) e^{-y/\overline{\gamma}}dy.
\end{align*}
	Hence Chebychev's inequality shows that $Y \PtoK \EE(Y) \le {\epsilon \over 2}$, from which we deduce:
	$$
			\PP( m_{L+1} \le \epsilon) \toK 1.
	$$
	So combining both cases, we have that:
	$$
		\PP( m_{\ell} \le  \max_{c \ge 0} \Psi(c) + \epsilon) \toK 1 \;,\; \ell=1,..,L+1.
	$$
	We have proven that, for all $\epsilon > 0$:
	$$
	 		\PP( \max_{c \ge 0} \Psi(c) - \epsilon \le \max_{c \ge 0} \psi(c,\vh)  \le \max_{c \ge 0} \Psi(c) + \epsilon ) \toK 1,
	$$
	and $\max_{c \ge 0} \psi(c,\vh) \PtoK \max_{c \ge 0} \Psi(c)$ as announced.	
\ep

\subsection{Putting it all together}\label{subsec:proof_summary}

	We now complete the proof of Theorem~\ref{th:mainres}. From proposition~\ref{prop:deterministic_equivalent}, asymptotically with high probability, utility optimal scheduling can be realized by selecting a threshold policy, where the threshold $c^\star$ is a maximizer of the deterministic mapping $\Phi$. Under a theshold policy with threshold $c^\star$, the rate of user $i$ is given by:
$$
	U_i = \EE\left( {1 \over T(m,J(c^\star))} \log(1+c^\star) \indic\{ h_i \ge c^\star\}\right), 
$$ 
where $J(c^\star)$ is the number of users whose channel realization is above $c^\star$:
$$
	J(c^\star) = \sum_{i=1}^K \indic\{h_i \ge c^\star\}.
$$
	We have:
	\begin{align*}
		\left|U_i - \EE\left( {1 \over T(m,\infty)} \log(1+c^\star) \indic\{ h_i \ge c^\star\}\right)\right| 
		\le \log(1+c^\star)\EE\left( \left| {1 \over T(m,\infty)} -  {1 \over T(m,J(c^\star))}\right|\right).
	\end{align*}
	From the law of large numbers:
	$$
		{J(c^\star) \over K} \ge {1 \over K} \sum_{i=1}^K \indic\{\underline{\gamma}(h_i/\gamma_i) \ge c^\star\} \astoK e^{-{c^\star \over \underline{\gamma}}} > 0,
	$$
	therefore $J(c^\star) \astoK \infty$. Hence 
	$$
	{1 \over T(m,J(c^\star))} \astoK  {1 \over T(m,\infty)}
	$$
	and 
	$$
	\EE\left({1 \over T(m,J(c^\star))}\right) \le {1 \over T(m,1)} \,\,,\,\, K \ge 1, 
	$$
	so we apply Lebesgue's theorem to yield:
	$$
		\EE\left( \left| {1 \over T(m,\infty)} -  {1 \over T(m,J(c^\star))}\right|\right)  \toK 0.
	$$
	We have proven that:
	\begin{align*}
		U_i &\toK \EE\left( {1 \over T(m,\infty)} \log(1+c^\star) \indic\{ h_i \ge c^\star\}\right) \\
				&= {1 \over T(m,\infty)} \log(1+c^\star) e^{-{c^\star \over \gamma_i}}.
	\end{align*}
	The value of $c^\star$ may be retrieved from the fact that applying theshold policy with threshold $c^\star$ maximizes the utility ${1 \over K} \sum_{i=1}^K g_\alpha(U_i)$, hence:
\begin{align*}
	c^\star &\in \underset{c \ge 0}{\argmax} \left\{ {1 \over K} \sum_{i=1}^K g_{\alpha}\left( {\log(1 + c) e^{-{c \over \gamma_i}} \over T(m,\infty)}\right) \right\}, \\
	&=\underset{c \ge 0}{\argmax} \left\{ {1 \over K} \sum_{i=1}^K g_{\alpha}\left( \log(1 + c) e^{-{c \over \gamma_i}}\right) \right\}.
\end{align*}
This completes the proof of Theorem~\ref{th:mainres}.

\section{Numerical Experiments}\label{sec:numericalEx}

In this section, we illustrate the performance of the various schemes defined in the previous sections through numerical experiments. For each scheme, we compute the long term average data rates of each user $U_1,...,U_K$, and the corresponding utility ${1 \over K} \sum_{i=1}^K g_\alpha(U_i)$, which is our objective function. The considered schemes are recalled below.
\begin{itemize}
\item \underline{Superposition}: At each slot $t$, this scheme solves the weighted sum rate maximization problem in $\Gamma(\hv(t))\subseteq \RR_+^{2^K-1}$, using Theorem \ref{theorem:WSR}:
$$
\Rm_{\rm sp}(\hv(t),t) =\arg\max_{\Rm \in \Gamma(\hv(t))}\sum_{\Jc: \Jc \subseteq [K]} \theta_{\Jc}(t) R_{\Jc} \, \text{ with } \theta_{\Jc}(t)=\frac{\sum_{i\in\Jc}\frac{1}{u_i^{\alpha}(t)}}{T(m,|\Jc|)},
$$
where $u_i(t)$ is the mean empirical rate up to time $t$ for user $i$. The average rate of user $i$ is
$$
U_{{\rm sp},i}= \lim_{t \to \infty} \EE\left[ \sum_{\Jc:i\in\Jc}\frac{1}{T(m,|\Jc|)}R_{{\rm sp},\Jc}(\hv(t),t)\right]. 
$$
\item \underline{Selection with full CSIT}: At each slot $t$, this scheme selects the subset of users 
$$
\Jc_{\rm sc}(\hv(t),t) = \argmax_{\Jc\subseteq[K]} \left\{ {1 \over T(m,|\Jc|)} \log(1 + \min_{j \in \Jc} h_j(t)) \sum_{i=1}^K {\indic\{i \in \Jc \} \over u_i(t)^{\alpha}} \right\}.
$$
The average rate of user $i$ is:
$$
U_{{\rm sc},i}=\lim_{t \to \infty} \EE\left[ \frac{1}{T(m,|\Jc_{\rm sc}(\hv(t),t)|)}\log(1+\min_{ j\in\Jc_{\rm sc}(\hv(t),t)}h_j(t))\indic\{i\in\Jc_{\rm sc}(\hv(t),t)\}\right] .
$$
\item \underline{Threshold-based selection}: At each slot $t$, this scheme selects the subset of users $\Jc_{_{\rm th}}(\hv(t))=\{i: h_i(t)\geq c^\star\}$, where $c^\star$ is the threshold given by \eqref{eq:optimalthreshold}, and depends only on the channel statistics $\gamma_1,...,\gamma_K$. The average rate of user $i$ is:
$$
U_{{\rm th},i}=\EE\left[ \frac{1}{T(m,|\Jc_{{\rm th}}(\hv(t))|)}\log(1+\min_{j\in\Jc_{\rm th}(\hv(t))}h_j(t))\indic\{i\in\Jc_{\rm th}(\hv(t))\}\right] .
$$
\item \underline{Baseline}: At each slot $t$, this scheme selects the subset of users $\{1,...,K\}$, and the average rate of user $i$ is:
$$
U_{{\rm bl},i}=\frac{1}{T(m,K)}\EE\left[\log(1+\min_{1\leq j\leq K}h_j(t))\right]. 
$$
\end{itemize}
In all scenarios, we divide users into two classes of $K/2$ users each: strong users with $\gamma_k = P$ and weak users with $\gamma_k = 0.2P$. For each figure we consider a normalized cache size of $m=[0.1,0.6]$. In Figs. \ref{fig:ka0}, \ref{fig:ka1} and \ref{fig:ka2} we plot the utility versus $K$ for $\alpha=0$, $\alpha=1$ and $\alpha=2$ respectively at $P=10$ dB. In Figs. \ref{fig:pa0}, \ref{fig:pa1} and \ref{fig:pa2} we plot the utility versus $P$ for $\alpha=0$, $\alpha=1$ and $\alpha=2$ respectively with $K=20$ users. We draw the following conclusions:
\begin{itemize}
\item Complexity: As seen in Figs. \ref{fig:ka0} and \ref{fig:pa0}, superposition encoding outperforms all the others schemes at the price of a very high complexity of ${\cal O}(2^K)$ compared to other schemes whose complexity is polynomial: ${\cal O}(K^2)$ for the selection scheme with full CSIT and ${\cal O}(K)$ for the threshold-based selection scheme.
\item Number of users $K$: From Figs. \ref{fig:ka0}, \ref{fig:ka1} and \ref{fig:ka2}, the performance of the threshold-based scheme is as good as full CSIT selection scheme for a sufficiently large $K$, as predicted by Theorem~\ref{th:mainres}. In Fig. \ref{fig:ka0}, corresponding to $\alpha=0$, the average per user rate of the baseline scheme vanishes as the number of users increases for both small and large cache size as predicted by Proposition \ref{BaselineScheme}. For $\alpha=1$ and $\alpha=2$, the utility of the baseline scheme decreases with the number of users. On the contrary, the utility of all the other schemes converges to a constant as $K$ grows for all $\alpha$.
\item Power constraint $P$: We observe in Figs. \ref{fig:pa0}, \ref{fig:pa1} and \ref{fig:pa2} that the performance of full CSIT selection, threshold-based selection and baseline schemes becomes identical for large $P$, which is expected since in that case the multicast rate is not limited by users with small channel gains. Therefore, all users are selected. Note that~\cite[proposition 2]{ghorbel2017opportunistic} proves that the full CSIT selection scheme coincides with the baseline scheme in the large $P$ regime for $\alpha=0$.
\item Memory size $m$: Figs. \ref{fig:ka0}-\ref{fig:pa2} show that the gap between the threshold-based scheme and the full CSIT scheme decreases with the memory size. Such a behavior is justified by Property \ref{pt2} stating that the function $k\rightarrow T(m,k)$ converges to $\frac{1-m}{m}$ faster as the memory size $m$ increases. 
\item Alpha-fairness $\alpha$: We now consider the performance as a function of the fairness parameter $\alpha$. We notice that the gap between the selection with full CSIT and the threshold-based selection decreases as the parameter $\alpha$ increases. This is because both schemes tend to coincide with the baseline scheme, or max-min scheduler as $\alpha \to \infty$.
\end{itemize}  
  In summary, remarkably, even for a relatively reasonable number of users, say $K \ge 50$, the threshold-based selection scheme ensures near optimal performance, with both $1$-bit feedback and linear complexity ${\cal O}(K)$, which makes this scheme appealing for practical implementation.

\begin{figure}[!h]
  \begin{minipage}[t]{0.5\linewidth}
\centering
\includegraphics[width=0.9\textwidth,clip=]{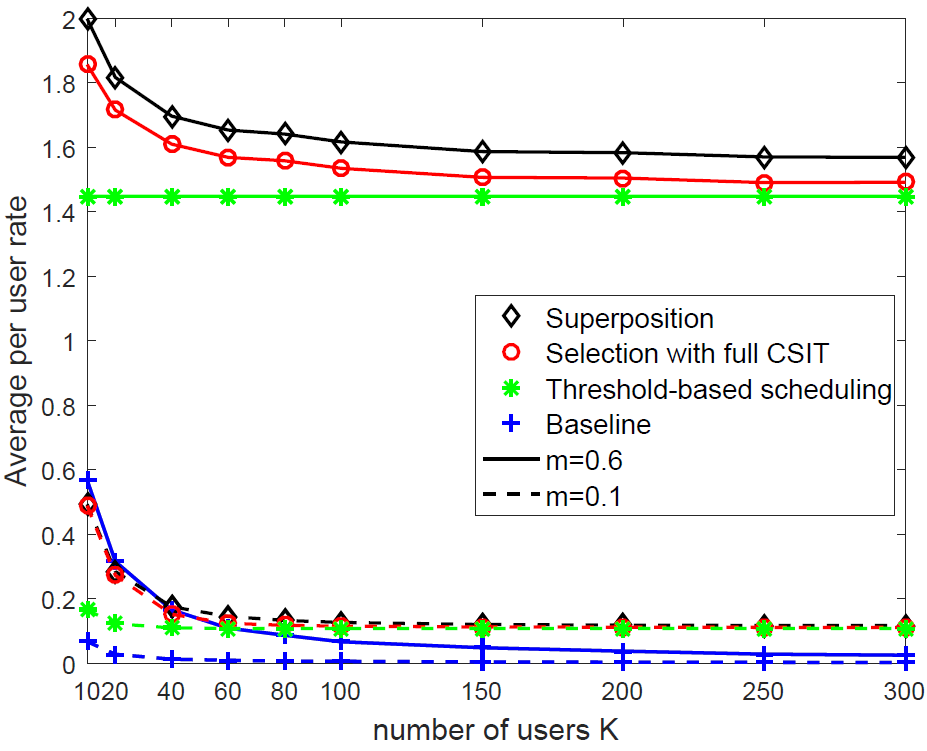}
\caption{Average per user rate vs $K$ for $\alpha=0$, $P=10$dB and $m=[0.1,0.6]$.}
\label{fig:ka0}

\end{minipage}
  \begin{minipage}[t]{0.5\linewidth}
\centering
\includegraphics[width=0.9\textwidth,clip=]{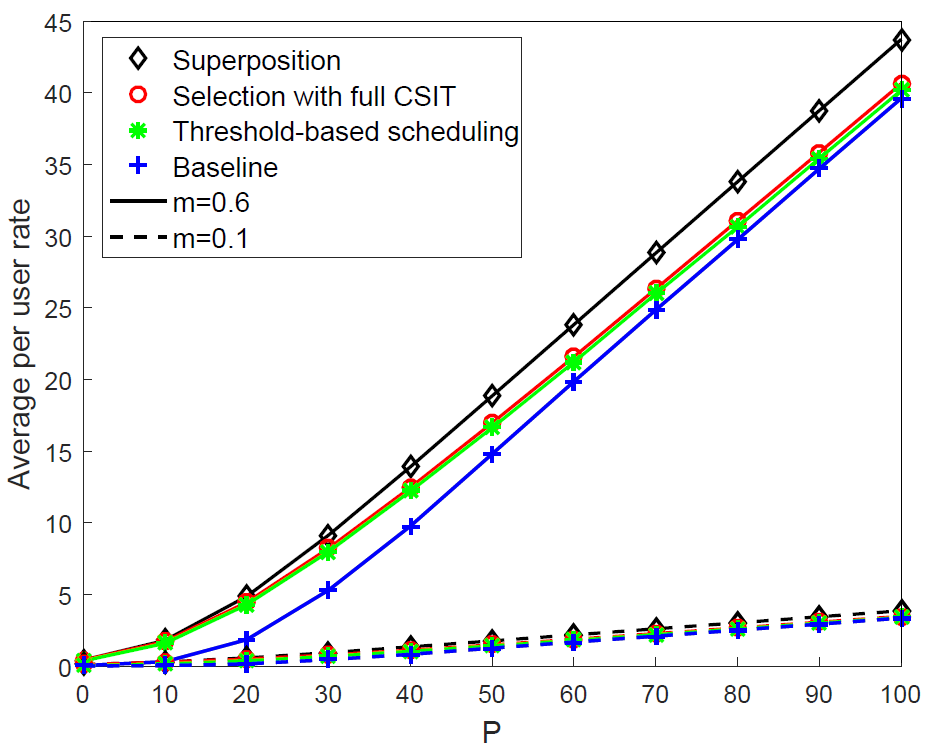}
\caption{Average per user rate vs $P$ for $\alpha=0$, $K=20$ and $m=[0.1,0.6]$.}
\label{fig:pa0}

\end{minipage}
\end{figure}

\begin{figure*}
  \begin{minipage}{0.5\linewidth}
\begin{center}
\includegraphics[width=0.9\textwidth,clip=]{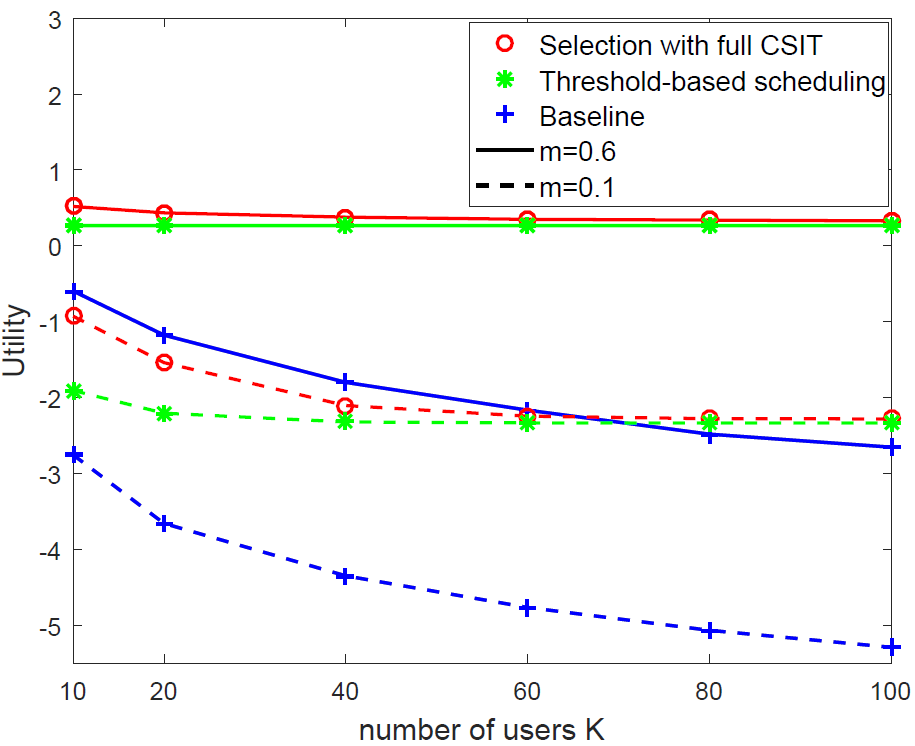}
\caption{Utility vs $K$ for $\alpha=1$, $P=10$dB and $m=[0.1,0.6]$.}
\label{fig:ka1}
\end{center}
\end{minipage}
\begin{minipage}{0.5\linewidth}
\begin{center}
\includegraphics[width=0.9\textwidth,clip=]{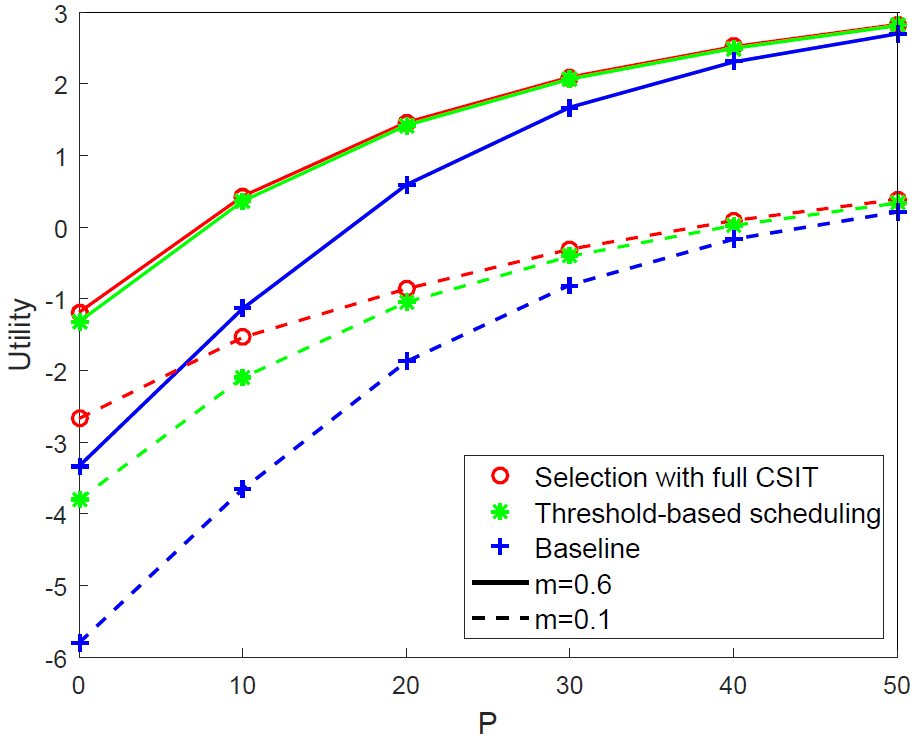}
\caption{Utility vs $P$ for $\alpha=1$, $K=20$ and $m=[0.1,0.6]$.}
\label{fig:pa1}
\end{center}
\end{minipage}
\end{figure*}

\begin{figure*}
  \begin{minipage}{0.5\linewidth}
\begin{center}
\includegraphics[width=0.9\textwidth,clip=]{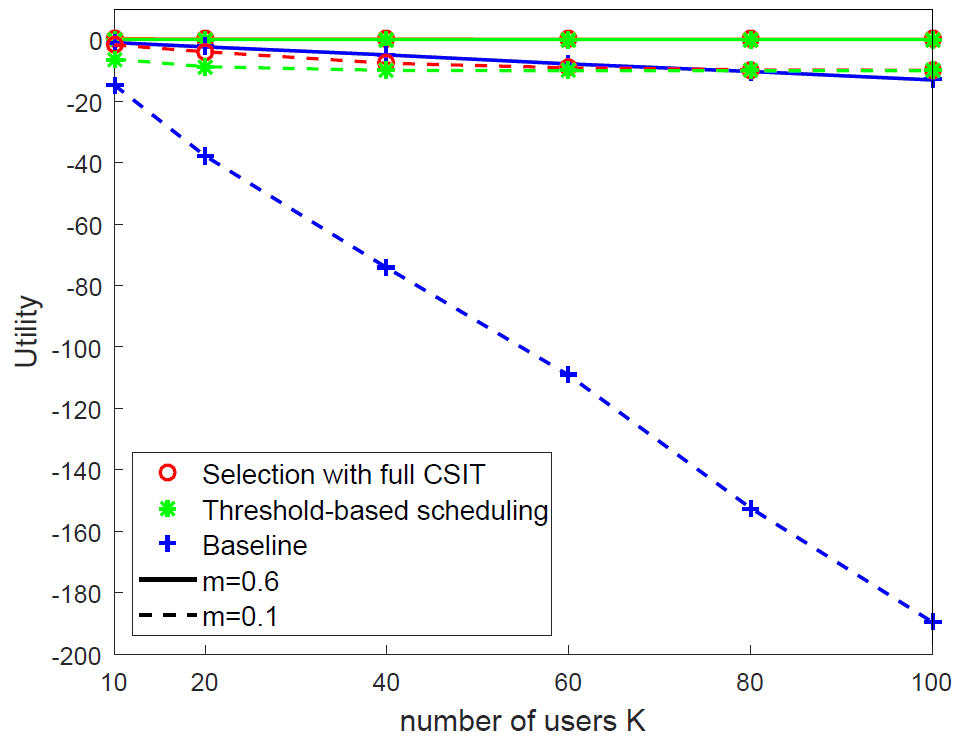}
\caption{Utility vs $K$ for $\alpha=2$, $P=10$dB and $m=[0.1,0.6]$.}
\label{fig:ka2}
\end{center}
\end{minipage}
\begin{minipage}{0.5\linewidth}
\begin{center}
\includegraphics[width=0.9\textwidth,clip=]{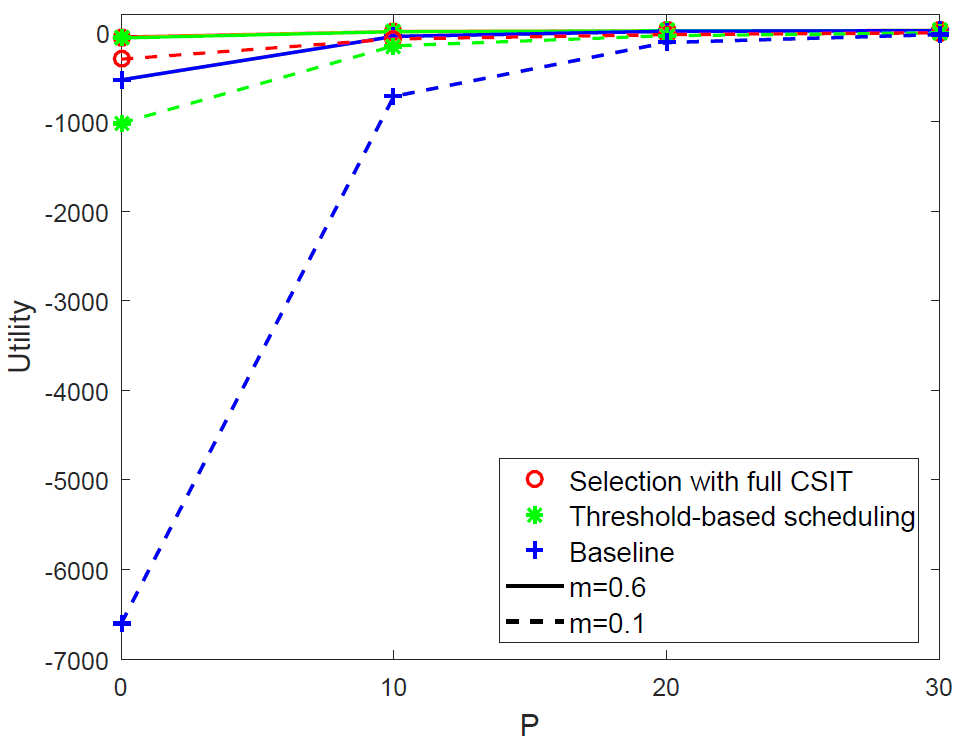}
\caption{Utility vs $P$ for $\alpha=2$, $K=20$ and $m=[0.1,0.6]$.}
\label{fig:pa2}
\end{center}
\end{minipage}
\end{figure*}


\section{Conclusion} 

Recent works have revealed that the theoretical gain of coded caching is sensitive to the behavior of the multicast rate of the
underlying channel and might vanish in the regime of a large number of users. In order to overcome such detrimental effect, we have
studied opportunistic scheduling schemes for coded caching over the asymmetric fading broadcast channel. For the alpha-fairness
utility function of the long-term average rates, we have proposed a simple threshold-based scheduling policy, which requires only
statistical channel knowledge and can be implemented by a simple one-bit feedback from each user. Our striking result, through
rigorous and rather involved analysis, demonstrates that such threshold-based policy is asymptotically optimal as the number of users
grows. Additionally, the numerical examples show that our proposed policy incurs a negligible loss with respect to the optimal
scheduling scheme (requiring full channel knowledge) for a reasonable number of users, i.e., between 20 to 100 users depending on the fairness parameter and the memory size. 
\bibliographystyle{IEEEtran}
\bibliography{caching}

\appendix

\subsection{Proof of Proposition \ref{BaselineScheme}} \label{appendix:prop1}
The content delivery rate is:
\begin{align*}
\Rbl (K,\gamma)= \frac{K}{T(m,K)} \EE( \log(1 + \min_{k=1,...,K} h_k)).
\end{align*}
Since $(h_k)_{k=1,...,K}$ are i.i.d. exponentially distributed with mean $\gamma$, $\min_{k=1,...,K} h_k$ is also exponentially distributed with mean ${\gamma \over K}$. Hence:
\begin{align*}
 	\EE\left[ \log\left(1+ \min_{k=1,...,K} h_k \right)\right] &= \int_{0}^{+\infty} e^{-x} \log\left(1 + {\gamma \over K}x\right) dx \\
 	&= e^{{K \over \gamma}} E_1\left({K \over \gamma}\right),
\end{align*}
which yields statement (i). 

When $K \to \infty$ we have $\frac{K}{T(m,K)} \sim {K m \over 1-m}$ and
$$
\int_{0}^{+\infty} e^{-x} \log\left(1 + {\gamma \over K}x\right) dx \sim {\gamma \over K} \int_{0}^{+\infty} x e^{-x} dx = {\gamma \over K}.
$$
Replacing yields statement (ii). 

When $\gamma \to \infty$, ${K \over \gamma} \to 0$. Since $E_1(x) \sim \log(1/x)$ for $x \to 0$ we obtain statement (iii).

\subsection{Proof of Theorem \ref{theorem:Region}}\label{subsection:th1}

Let $M_{\Kc}$ be the message for all the users in $\Kc\subseteq [K]$ and of size $2^{nR_\Kc}$. 
We first show the converse. 
It follows that the set of $2^K - 1$ independent messages $\{
M_\Kc:\ \Kc\subseteq [K],\, \Kc\ne\emptyset \}$ can be partitioned as 
\begin{align}
\bigcup_{k=1}^K \{ M_\Kc:\ k\in\Kc\subseteq[k]\}. 
\end{align}%
We can now define $K$ independent mega-messages $M'_k := \{ M_\Kc:\ k\in\Kc\subseteq[k]\}$ with rate $R'_k:=\sum_{\Kc:\,k\in\Kc\subseteq[k]}R_\Kc$.
Note that each mega-message~$k$ must be decoded at least by user~$k$ reliably. Thus, the $K$-tuple $(R'_1,\ldots,R'_K)$ must lie inside the
private-message capacity region of the $K$-user BC. Since it is a degraded BC, the capacity region is known~\cite{el2011network}, and we have
\begin{align}
  R'_k &\le \log \frac{1+h_k \sum_{j=1}^k \beta_j}{1+h_k \sum_{j=1}^{k-1} \beta_j}, \quad k=2,\ldots,K, \label{eq:tmp99}
\end{align}%
for some $\beta_j\ge0$ such that $\sum_{j=1}^K \beta_j\le 1$. This establishes the converse.  

To show the achievability, it is enough to use rate-splitting. Specifically, the transmitter first assembles the original messages into
$K$ mega-messages, and then applied the standard $K$-level superposition coding~\cite{el2011network} putting the~$(k-1)$-th signal on top of the~$k$-th signal. The
$k$-th signal has average power $\beta_k$, $k\in[K]$. At the receivers' side, if the rate of the mega-messages are inside the private-message
capacity region of the $K$-user BC, i.e., the $K$-tuple  $(R'_1,\ldots,R'_K)$ satisfies \eqref{eq:tmp99}, then each user $k$ can decode the
mega-message~$k$. Since the channel is degraded, the users~1 to $k-1$ can also decode the mega-message~$k$ and extract its own message.
Specifically, each user~$j$ can obtain $M_{\Jc}$ (if $\Jc\ni j$), from the mega-message~$k$ when $\Jc\subseteq \Kc$ and
$k\in\Kc\subseteq[k]$. This completes the achievability proof.

\subsection{Proof of Theorem \ref{theorem:WSR}}\label{subsection:th2}

The proof builds on the simple structure of the capacity region. We first remark that for a given power allocation of other users, user $k$ sees $2^{k-1}$ messages $\{M_{\Jc}\}$ for $k\in \Jc \subseteq [k]$ with equal channel gain. For a given power allocation $\{\beta_k\}$, the capacity region of these messages is a simple hyperplane characterized by $2^{k-1}$ vertices $C_k \ev_i$ for $i=1, \dots, 2^{k-1}$, where $C_k$ is the sum rate of user $k$ in the RHS of \eqref{eq:9} and $\ev_i$ is a vector with one for the $i$-th entry and zero for the others. Therefore, the weighted sum rate is maximized for user $k$ by selecting the vertex corresponding to the largest weight. This holds for any $k$. 
\subsection{Complexity of selection scheme with full CSIT}\label{subsection:compx}
  Assume that $h_1(t) \ge ... \ge h_K(t)$, i.e. $\vh(t)$ has been previously sorted. Define $k=\max \Jc(\hv(t),t)$ the index of the worst user and the set size $s=|\Jc(\hv(t),t)|$. Let $\nu_{k}$ be a permutation on $\{1,...,k\}$ such that $u_{\nu_{k}(1)}(t) \le ... \le u_{\nu_{k}(k)}(t)$. Since $\Jc(\hv(t),t)$ is a maximizer of~\eqref{eq:scheduling_gradient}:
\begin{align*}
	{\log(1+ h_k(t)) \over T(m,s)}  \sum_{i=1}^K {\indic\{i \in \Jc(\hv(t),t) \} \over u_i(t)^{\alpha}}&={\log(1+ h_k(t)) \over T(m,s)} \max_{\Jc\subseteq[K] : |\Jc|=s,\max\Jc=k} \sum_{i=1}^K  {\indic\{i \in \Jc \} \over u_i(t)^{\alpha}} \\
						&=  {\log(1+ h_k(t)) \over T(m,s)} \sum_{i=1}^{s}\frac{1}{(u_{\nu_{k}(i)}(t))^\alpha}.
\end{align*}
This implies:
\begin{align*}
\Jc(\hv(t),t)=\{\nu_{k}(1),...,\nu_{k}(s)\}.
\end{align*}
Hence $\Jc(\hv(t),t)$ can be computed by sorting $\vh(t)$ and $\vu(t)$, (with complexity ${\cal O}(K \log(K))$ using quick sort), and searching over the possible values of $k=1,...,K$ and $s=1,...,K$ (with complexity ${\cal O}(K^2)$). Thus, finding $\Jc(\hv(t),t)$ takes time ${\cal O}(K^2)$.

%
\end{document}